\setlist{itemsep=1pt, topsep=2pt}
   \def\@citecolor{blue}%
   \def\@urlcolor{blue}%
   \def\@linkcolor{blue}%
\def\orcidID#1{\smash{\href{http://orcid.org/#1}{\protect\raisebox{-1.25pt}{\protect\includegraphics{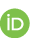}}}}}
\newenvironment{proofsketch}{%
  \proof}{\endproof}
\DeclareMathAlphabet{\mathpzc}{OT1}{pzc}{m}{it}
\renewcommand{\epsilon}{\varepsilon}
\let\oldphi\phi
\let\oldvarphi\varphi
\renewcommand{\varphi}{\oldphi}
\renewcommand{\phi}{\oldvarphi}
\newcommand{\nt}{\mathsf{nt}}
\newcommand{\pred}[1]{\mathsf{#1}}
\newcommand{\pl}[1]{\textsf{#1}}
\newcommand{\tool}[1]{\textsf{#1}}
\newcommand{\inc}[1]{#1\raisebox{.4ex}{\tiny\bf++}}
\newcommand{\dec}[1]{#1\raisebox{.4ex}{\tiny\bf--\! --}}
\newcommand{\eq}[1]{\overset{=}{#1}}
\newcommand{\Init}{\textsc{Init}\xspace}
\newcommand{\Step}{\textsc{Step}\xspace}
\newcommand{\Accelerate}{\textsc{Accelerate}\xspace}
\newcommand{\Covered}{\textsc{Covered}\xspace}
\newcommand{\Backtrack}{\textsc{Backtrack}\xspace}
\newcommand{\Refute}{\textsc{Refute}\xspace}
\newcommand{\Prove}{\textsc{Prove}\xspace}
\newcommand{\Nonterm}{\textsc{Nonterm}\xspace}
\newcommand{\proj}[2]{\langle #1 \rangle_{#2}}
\newcommand{\mbp}{\mathsf{sip}}
\newcommand{\unsafe}{\mathsf{unsafe}}
\newcommand{\safe}{\mathsf{safe}}
\newcommand{\bt}{\mathsf{bt}}
\newcommand{\cond}{\mathsf{cond}}
\newcommand{\guard}[1]{\,\llbracket #1 \rrbracket}
\newcommand{\accel}{\mathsf{accel}}
\newcommand{\chain}{\mathsf{chain}}
\newcommand{\init}{\mathsf{init}}
\newcommand{\err}{\mathsf{err}}
\newcommand{\QF}{\mathsf{QF}}
\renewcommand{\AA}{\mathcal{A}}
\newcommand{\LL}{\mathcal{L}}
\newcommand{\VV}{\mathcal{V}}
\newcommand{\state}[1]{\ensuremath{\mathfrak{#1}}}
\newcommand{\ZZ}{\mathbb{Z}}
\newcommand{\NN}{\mathbb{N}}
\newcommand{\CC}{\mathcal{C}}
\newcommand{\TT}{\mathcal{T}}
\renewcommand{\SS}{\mathcal{S}}
\newcommand{\Def}{\mathrel{\mathop:}=}
\renewcommand{\emptyset}{\varnothing}
\crefname{equation}{eq.}{equations}%
\crefname{chapter}{chapter}{chapters}%
\crefname{section}{sect.}{sections}%
\crefname{appendix}{app.}{appendices}%
\crefname{enumi}{item}{items}%
\crefname{footnote}{footnote}{footnotes}%
\crefname{figure}{fig.}{figures}%
\crefname{table}{table}{tables}%
\crefname{theorem}{thm.}{theorems}%
\crefname{lemma}{lemma}{lemmas}%
\crefname{corollary}{cor.}{corollaries}%
\crefname{proposition}{proposition}{propositions}%
\crefname{definition}{def.}{definitions}%
\crefname{result}{result}{results}%
\crefname{example}{ex.}{examples}%
\crefname{remark}{remark}{remarks}%
\crefname{note}{note}{notes}%
\newcommand{\report}[1]{#1}
\newcommand{\paper}[1]{}
\title{Proving Non-Termination by Acceleration Driven Clause Learning\paper{ (Short Paper)}\thanks{funded by
    the Deutsche Forschungsgemeinschaft (DFG, German Research Foundation)
    - 235950644 (Project GI 274/6-2)}}
\titlerunning{Proving Non-Termination by Acceleration Driven Clause Learning}
\author{Florian Frohn\paper{$^{(\mbox{\Letter})}$}\orcidID{0000-0003-0902-1994} \and Jürgen Giesl\paper{$^{(\mbox{\Letter})}$}\orcidID{0000-0003-0283-8520}}
\institute{LuFG Informatik 2, RWTH Aachen University, Aachen, Germany}
\authorrunning{F.\ Frohn, J.\ Giesl}
\begin{document}

\maketitle

\begin{abstract}
  We recently proposed \emph{Acceleration Driven Clause Learning} (ADCL), a
  novel calculus to analyze satisfiability of \emph{Con\-strained Horn Clauses} (CHCs).
  Here, we adapt ADCL to transition systems and introduce ADCL-NT, a variant for disproving termination.
  We implemented ADCL-NT in our tool \tool{LoAT} and evaluate it against the state of the art.
\end{abstract}

\section{Introduction}
\label{sect:Introduction}

Termination is one of the most important properties of programs, and thus termination
analysis is a very active field of research.
Here, we are concerned with \emph{dis}proving termination of \emph{transition systems} (TSs), a popular intermediate representation for verification of programs written in more expressive languages.

\begin{example}
  \label{ex:leading}
  Consider the following TS $\TT$ with entry-point $\init$ and two further \emph{locations} $\ell_1,\ell_2$ over the variables $x,y,z$, where $x',y',z'$ represent the values of $x,y,z$ \emph{after} applying a transition, and $\eq{x}, \inc{x}$, and $\dec{x}$ abbreviate $x' = x$, $x' = x + 1$, and $x' = x - 1$.
  The first two transitions are a variant\footnote{We generalized the example to make it more interesting, and we added the condition $y \leq 2 \cdot z$ to enforce termination
  of $\tau_{\ell_1}$.}
    of {\tt chc-LIA-Lin\_052} from the \emph{CHC Competition~'22}
    \cite{CHC-COMP} and the last two are a variant\footnote{We combined the
    transitions for the cases $x > y$ and $x < y$ into the equivalent transition
    \ref{eq:ex1-neq} to demonstrate how our approach can deal with disjunctions in
    conditions.} of {\tt flip2\_rec.jar-obl-8} from the \emph{Termination and Complexity Competition
    (TermComp)} \cite{termcomp}.
  \begin{align}
    \init & {} \to \ell_1 \guard{x' \leq 0 \land z' \geq 5000 \land y' \leq z'}  \label{eq:ex1-init} \tag{\protect{\ensuremath{\tau_{\mathsf{i}}}}} \\
    \ell_1 & {} \to \ell_1 \guard{y \leq 2 \cdot z \land \inc{x} \land ((x < z \land \eq{y}) \lor (x \geq z \land \inc{y})) \land \eq{z}} \label{eq:ex1-rec} \tag{\protect{\ensuremath{\tau_{\ell_1}}}} \\
    \ell_1 & {} \to \ell_2 \guard{x = y \land x > 2 \cdot z \land \eq{x} \land \eq{y}} \label{eq:ex1-nonrec} \tag{\protect{\ensuremath{\tau_{\mathsf{\ell_1 \to \ell_2}}}}} \\
    \ell_2 & {} \to \ell_2 \guard{x = y \land x > 0 \land \eq{x} \land \dec{y}} \label{eq:ex1-eq} \tag{\protect{\ensuremath{\tau_{\ell_2}^=}}}\\
    \ell_2 & {} \to \ell_2 \guard{x > 0 \land y > 0 \land x'=y \land ((x > y \land y' = x) \lor (x < y \land \eq{y}))} \label{eq:ex1-neq} \tag{\protect{\ensuremath{\tau_{\ell_2}^{\neq}}}}
  \end{align}
  At $\ell_1$, $\TT$ operates in two ``phases'':
  First, just $x$ is incremented until $x$ reaches $z$ ($1^{st}$ disjunct of \ref{eq:ex1-rec}).
  Then, $x$ and $y$ are incremented until $y$ reaches $2 \cdot z + 1$ ($2^{nd}$ disjunct of
  \ref{eq:ex1-rec}).
  If $x = y = c$ holds for some $c > 1$ at that point (which is the case if $x \leq y = z$ holds initially), then the execution can continue at $\ell_2$ as follows:
  \begin{gather*}
    \ell_2(c,c,c_z) \longrightarrow_{\ref{eq:ex1-eq}} \ell_2(c,c-1,c_z) \longrightarrow_{\ref{eq:ex1-neq}} \ell_2(c-1,c,c_z) \longrightarrow_{\ref{eq:ex1-neq}} \ell_2(c,c,c_z) \longrightarrow_{\ref{eq:ex1-eq}} \ldots
  \end{gather*}
  Here, $\ell_2(c,c,c_z)$ means that the current location is $\ell_2$ and the values of $x,y,z$ are $c,c,c_z$.
  The $1^{st}$ and $2^{nd}$ step with \ref{eq:ex1-neq}
  satisfy the $1^{st}$ ($x > y \land \ldots$) and $2^{nd}$ ($x < y \land \ldots$) disjunct of \ref{eq:ex1-neq}'s condition, respectively.
  Thus, $\TT$ does not terminate.
\end{example}

\Cref{ex:leading} is challenging for state-of-the-art tools for several reasons.
First, more than 5000
steps are required to reach $\ell_2$, so reachability of $\ell_2$ is difficult to prove for approaches that unroll the transition relation or use other variants of iterative deepening.
Thus, {\tt chc-LIA-Lin\_052} is beyond the capabilities of most other
state-of-the-art tools for proving reachability.

Second, the pattern ``$\ref{eq:ex1-eq}$, $1^{st}$ disjunct of $\ref{eq:ex1-neq}$, $2^{nd}$ disjunct of $\ref{eq:ex1-neq}$'' must be found to prove non-termination.
Therefore, {\tt flip2\_rec.jar-obl-8} (which does not use disjunctions) cannot be solved by
other state-of-the-art termination tools.

Third, \Cref{ex:leading} contains disjunctions, which are not supported by many termination tools.
Presumably, the reason is that most techniques for (dis)proving termination of loops are restricted to conjunctions (e.g., due to the use of templates and Farkas' Lemma).
While disjunctions can be avoided by splitting disjunctive transitions according to the DNF of their conditions, this leads to an exponential blow-up
in the number of transitions.

We present an approach that can prove non-termination of systems
like \Cref{ex:leading} automatically.
To this end, we tightly integrate non-termination techniques into our
recent \emph{Acceleration Driven Clause Learning (ADCL)}
calculus \cite{cav23}, which has originally been designed for CHCs, but it can also be used to analyze TSs.

Due to the use of acceleration techniques that compute the transitive closure of recursive
transitions, ADCL finds long witnesses of reachability automatically.
If acceleration techniques cannot be applied, it unrolls the transition relation, so it can easily detect complex patterns of transitions that admit non-terminating runs.
Finally, ADCL reduces reasoning about disjunctions to reasoning about conjunctions by
considering conjunctive variants of disjunctive transitions.
Thus, combining ADCL with
non-termination techniques for
conjunctive transitions allows for disproving termination of TSs with complex Boolean
structure.

After introducing preliminaries in \Cref{sec:preliminaries}, \Cref{sec:ADCL} presents a straightforward adaption of ADCL to TSs.
\Cref{sec:nonterm} introduces our main contribution: ADCL-NT, a variant of ADCL for proving non-termination.
Finally, in \Cref{sec:experiments}, we discuss related work and demonstrate the power of our approach
by comparing it with other state-of-the-art tools.
All proofs can be found in\report{ the appendix}\paper{ \cite{report}}.

\section{Preliminaries}
\label{sec:preliminaries}

We assume familiarity with basics from many-sorted first-order logic.
$\VV$ is a countably infinite set of variables and $\AA$ is a first-order theory over a $k$-sorted signature $\Sigma_\AA$ with carrier $\CC_\AA = (\CC_{\AA,1},\ldots,\CC_{\AA,k})$.
$\QF(\Sigma_\AA)$ is the set of all
quantifier-free first-order formulas over $\Sigma_\AA$, which are  w.l.o.g.\ assumed to be in negation normal form, and $\QF_\land(\Sigma_\AA)$ only contains conjunctions of $\Sigma_\AA$-literals.
Given a first-order formula $\eta$ over $\Sigma_\AA$, $\sigma$ is a \emph{model} of
$\eta$ (written $\sigma \models_\AA \eta$) if it is a model of $\AA$ with carrier
$\CC_\AA$, extended with interpretations for $\VV$ such that $\eta$ is satisfied.
As usual, $\models_\AA \eta$ means that $\eta$ is valid,
and $\eta \equiv_\AA \eta'$ means $\models_\AA \eta \iff \eta'$.

We write $\vec{x}$ for sequences and $x_i$ is the $i^{th}$ element of $\vec{x}$.
We use ``$::$'' for concatenation of sequences, where we identify sequences of length $1$ with their elements, so we may write, e.g., $x::\mathit{xs}$ instead of $[x]::\mathit{xs}$.

\vspace{-0.5em}
\paragraph*{\bf Transition Systems}

Let $d \in \NN$ be fixed, and let $\vec{x},\vec{x}' \in \VV^d$ be disjoint vectors of pairwise different variables.
Each $\psi \in \QF(\Sigma_\AA)$ induces a relation $\longrightarrow_\psi$ on $\CC_\AA^d$ where $\vec{s} \longrightarrow_\psi \vec{t}$ iff $\psi[\vec{x}/\vec{s},\vec{x}'/\vec{t}]$ is satisfiable.
So for the condition $\psi \Def ({x = y} \land {x > 0} \land \eq{x} \land \dec{y})$
 of \ref{eq:ex1-eq}, we have $(4,4,4)  \longrightarrow_{\psi} (4,3,7)$.
$\LL \supseteq \{\init,\err\}$ is a finite set of \emph{locations}.
A \emph{configuration} is a pair $(\ell,\vec{s}) \in \LL \times \CC_\AA^d$, written $\ell(\vec{s})$.
A \emph{transition} is a triple $\tau = (\ell,\psi,\ell') \in \LL \times \QF(\Sigma_\AA)
\times \LL$, written $\ell \to \ell' \guard{\psi}$, and its \emph{condition} is
$\cond(\tau) \Def \psi$.
W.l.o.g., we assume $\ell \neq \err$ and $\ell' \neq \init$.
Then $\tau$ induces a relation $\longrightarrow_{\tau}$ on configurations where $\state{s} \longrightarrow_{\tau} \state{t}$ iff $\state{s} = \ell(\vec{s}), \state{t} = \ell'(\vec{t})$, and $\vec{s} \longrightarrow_\psi \vec{t}$.
So, e.g., $\ell_2(4,4,4) \longrightarrow_{\tau_{\ell_2}^=} \ell_2(4,3,7)$.
We call $\tau$ \emph{recursive} if $\ell = \ell'$, \emph{conjunctive} if $\psi \in \QF_\land(\Sigma_\AA)$, \emph{initial} if $\ell = \init$, and \emph{safe} if $\ell' \neq \err$.
Moreover, we define $(\ell \to \ell' \guard{\psi})|_{\psi'} \Def \ell \to \ell' \guard{\psi'}$.
A \emph{transition system} (TS) $\TT$ is a finite set of transitions, and it induces the relation $\longrightarrow_{\TT} \Def \bigcup_{\tau \in \TT} {\longrightarrow_{\tau}}$.

\emph{Chaining} $\tau = \ell_s \to \ell_t \guard{\psi}$ and $\tau' = \ell_s' \to \ell_t' \guard{\psi'}$ yields $\chain(\tau,\tau') \Def (\ell_s \to \ell'_t \guard{\psi_c})$ where $\psi_c \Def \psi[\vec{x}' / \vec{x}''] \land \psi'[\vec{x} / \vec{x}'']$ for fresh $\vec{x}'' \in \VV^d$ if $\ell_t = \ell_s'$, and $\psi_c \Def \bot$ (meaning $\mathit{false}$) if $\ell_t \neq \ell_s'$.
So ${\longrightarrow_{\chain(\tau,\tau')}} = {\longrightarrow_\tau} \circ {\longrightarrow_{\tau'}}$, and $\chain(\ref{eq:ex1-nonrec},\ref{eq:ex1-eq}) = \ell_1
\to \ell_2 \guard{\psi}$ where $\psi \equiv_\AA (x = y \land x > 2
\cdot z \land x > 0 \land \eq{x} \land \dec{y})$.
For non-empty, finite sequences of transitions we define $\chain([\tau]) \Def \tau$ and
$\chain([\tau_1,\tau_2]::\vec{\tau}) \Def \chain(\chain(\tau_1,\tau_2)::\vec{\tau})$.
We lift notations for transitions to finite sequences via chaining.
So $\cond(\vec{\tau}) \Def \cond(\chain(\vec{\tau}))$,
$\vec{\tau}$ is \emph{recursive} if $\chain(\vec{\tau})$ is recursive,
${\longrightarrow_{\vec{\tau}}} = {\longrightarrow_{\chain(\vec{\tau})}}$, etc.
If $\tau$ is initial and $\cond(\tau::\vec{\tau}) \not\equiv_\AA \bot$, then $(\tau::\vec{\tau}) \in \TT^+$ is a \emph{finite run}.
$\TT$ is safe if every finite run is safe.
If there is a $\sigma$ such that $\sigma \models \cond(\vec{\tau}')$ for every finite prefix $\vec{\tau}'$ of $\vec{\tau} \in \TT^\omega$, then $\vec{\tau}$ is an \emph{infinite run}.
If no infinite run exists, then $\TT$ is \emph{terminating}.

\vspace{-0.5em}
\paragraph*{\bf Acceleration}

\emph{Acceleration techniques} compute the transitive closure of relations.
In the following definition, we only consider relations defined by conjunctive formulas, since many existing acceleration techniques do not support disjunctions \cite{bozga09a}, or have to resort to approximations in the presence of disjunctions \cite{acceleration-calculus}.

\begin{definition}[Acceleration]
  \label{def:accel}
  An \emph{acceleration technique} is a function $\accel: \QF_\land(\Sigma_\AA) \mapsto
  \QF_\land(\Sigma_{\AA'})$ such that ${\longrightarrow_{\psi}^+} =
     {\longrightarrow_{\accel(\psi)}}$,
     where $\AA'$ is a first-order theory.
  For recursive conjunctive transitions $\tau$, we define $\accel(\tau) \Def \tau|_{\accel(\cond(\tau))}$.
\end{definition}
So we clearly have ${\longrightarrow^+_\tau} = {\longrightarrow_{\accel(\tau)}}$.
Note that most theories are not ``closed under acceleration''.
E.g., accelerating the Presburger formula $x'_1 = x_1 + x_2 \land \eq{x_2}$ yields the non-linear formula $n > 0 \land x'_1 = x_1 + n \cdot x_2 \land \eq{x_2}$.
If neither $\NN$ nor $\ZZ$ are contained in $\CC_\AA$, then an additional sort for the range of $n$ is required in the formula that results from applying $\accel$.
Hence, \Cref{def:accel} allows $\AA' \neq \AA$.

\section{ADCL for Transition Systems}
\label{sec:ADCL}

We originally proposed the ADCL calculus to analyze satisfiability
of linear \emph{Con\-strained Horn Clauses} (CHCs)
\cite{cav23}. Here, we rephrase it for TSs, and in \Cref{sec:nonterm}, we modify it for
proving non-termination.
The adaption to TSs is straightforward as TSs can be transformed into
equivalent linear CHCs and vice versa (see, e.g., \cite{samir19}).

To bridge the gap between transitions $\tau$ where $\cond(\tau) \in \QF(\Sigma_\AA)$ and acceleration techniques for formulas from $\QF_\land(\Sigma_\AA)$, ADCL uses \emph{syntactic implicants}.

\begin{definition}[Syntactic Implicants \protect{\cite[Def.\ 6]{cav23}}]
  \label{def:implicant}
  If $\psi \in \QF(\Sigma_\AA)$, then:
  \begin{align*}
    \mbp(\psi,\sigma) & {} \Def \bigwedge \{\pi \text{ is a literal of } \psi \mid \sigma \models_\AA \pi\} && \text{if $\sigma \models_\AA \psi$} \\
    \mbp(\psi) & {} \Def \{ \mbp(\psi,\sigma) \mid \sigma \models_\AA \psi \}\\
    \mbp(\tau) & {} \Def \{\tau|_{\psi} \mid \psi \in \mbp(\cond(\tau))\} && \text{for transitions $\tau$} \\
    \mbp(\TT) & {} \Def \bigcup_{\tau \in \TT} \mbp(\tau) && \text{for TSs $\TT$}
  \end{align*}
  Here, $\mbp$ abbreviates \emph{syntactic implicant projection}.
\end{definition}
As $\mbp(\psi,\sigma)$ is restricted to literals from $\psi$, $\mbp(\psi)$ is finite.
Syntactic implicants ignore the semantics of literals.
So we have, e.g., $(X > 1) \notin \mbp(X > 0 \land X > 1) = \{X > 0 \land X > 1\}$.
It is easy to show $\psi \equiv_\AA \bigvee \mbp(\psi)$, and thus ${\longrightarrow_\TT} = {\longrightarrow_{\mbp(\TT)}}$.

Since $\mbp(\tau)$ is worst-case exponential in the size of $\cond(\tau)$, we do not compute it explicitly.
Instead, ADCL constructs a run $\vec{\tau}$ step by step, and to perform a step with $\tau$, it searches for a model $\sigma$ of $\cond(\vec{\tau}::\tau)$.
If such a model exists, it appends $\tau|_{\mbp(\cond(\tau),\sigma)}$ to $\vec{\tau}$.
This corresponds to a step with a conjunctive variant of $\tau$ whose condition is satisfied by $\sigma$.
In other words, our calculus constructs $\mbp(\cond(\tau), \sigma)$ ``on the fly'' when
performing a step with $\tau$, where $\sigma \models_\AA \cond(\vec{\tau}::\tau)$

The core idea of ADCL is to learn new, \emph{non-redundant} transitions via acceleration.
Essentially, a transition is redundant if its transition relation is a subset of another transition's relation.
Thus, redundant transitions are not useful for (dis-)proving safety.

\begin{definition}[Redundancy, \protect{\cite[Def.\ 8]{cav23}}]
  \label{def:redundancy}
  A transition $\tau$ is \emph{(strictly) redundant} w.r.t.\ $\tau'$, denoted $\tau \sqsubseteq \tau'$ ($\tau \sqsubset \tau'$) if ${\longrightarrow_\tau} \subseteq {\longrightarrow_{\tau'}}$ (${\longrightarrow_\tau} \subset {\longrightarrow_{\tau'}}$).
  For a TS $\TT$, we have $\tau \sqsubseteq \TT$ ($\tau \sqsubset \TT$) if $\tau \sqsubseteq \tau'$ ($\tau \sqsubset \tau'$) for some $\tau' \in \TT$.
\end{definition}
In the sequel, we assume oracles for redundancy, satisfiability of $\QF(\Sigma_\AA)$-formulas, and acceleration.
In practice, we use incomplete techniques instead (see \Cref{sec:experiments}).

From now on, let $\TT$ be the TS that is being analyzed with ADCL.
A \emph{state} of ADCL consists of a TS $\SS$ that augments $\TT$ with
\emph{learned transitions}, a run $\vec{\tau}$ of $\SS$ called the
\emph{trace},
and a sequence of sets of \emph{blocking transitions} $[B_i]_{i=0}^k$,
where transitions that are redundant w.r.t.\ $B_k$ must not be appended to the trace.

The following definition introduces the ADCL calculus.
It extends the trace step by step (using the rule \Step, which performs an evaluation step with a transition) and learns new transitions via acceleration (\Accelerate) whenever a suffix of the trace is recursive.
To avoid non-terminating ADCL-derivations, our notion of \emph{redundancy} from
\Cref{def:redundancy} is used to backtrack whenever a suffix of the trace corresponds to a special case of another (learned) transition (\Covered).
Moreover, \Backtrack is used whenever a run cannot be continued.
A more detailed explanation of ADCL is provided after \Cref{def:calc}.

\begin{definition}[ADCL \protect{\cite[Def.\ 9, 10]{cav23}}]
  \label{def:state}
  \label{def:calc}
  A \emph{state} is a triple $(\SS,[\tau_i]_{i=1}^k,[B_i]_{i=0}^k)$
  where $\SS \supseteq \TT$ is a TS, $\bigcup_{i=0}^k B_i \subseteq \mbp(\SS)$, and $[\tau_i]_{i=1}^k \in \mbp(\SS)^*$.
  The transitions in $\mbp(\TT)$ are called \emph{original} and the transitions in $\mbp(\SS)\setminus\mbp(\TT)$ are \emph{learned}.
  A transition $\tau_{k+1} \sqsubseteq B_k$ is \emph{blocked}, and 
  $\tau_{k+1} \not\sqsubseteq B_k$ is \emph{active} if $\chain([\tau_i]_{i=1}^{k+1})$ is
  an initial transition with satisfiable condition (i.e., $[\tau_i]_{i=1}^{k+1}$ is a run). Let
  \begin{gather*}
    \bt(\SS,[\tau_i]_{i=1}^k,[B_0,\ldots,B_k]) \Def (\SS,[\tau_i]_{i=1}^{k-1},[B_0,\ldots,B_{k-1} \cup \{\tau_k\}])
  \end{gather*}
  where $\bt$ abbreviates ``backtrack''.
  Our calculus is defined by the following rules.
  \[
    \begin{array}{cr@{\quad}cr}
      \infer{\TT \leadsto (\TT,[],[\emptyset])}{} & (\Init) & \infer{(\SS,\vec{\tau},\vec{B}) \leadsto (\SS,\vec{\tau}::\tau,\vec{B}::\emptyset)}{\tau \in \mbp(\SS) \text{ is active}} & (\Step)
      \\[0.5em]
      \multicolumn{3}{c}{\infer{(\SS,\vec{\tau}::\vec{\tau}^\circlearrowleft,\vec{B}::\vec{B}^\circlearrowleft) \leadsto (\SS \cup \{\tau\},\vec{\tau}::\tau,\vec{B}::\{\tau\})}{\vec{\tau}^\circlearrowleft \text{ is recursive} & |\vec{\tau}^\circlearrowleft| = |\vec{B}^\circlearrowleft| & \accel(\vec{\tau}^\circlearrowleft) = \tau \not\sqsubseteq \mbp(\SS)}} & \mathllap{(\Accelerate)} \\[0.5em]
      \multicolumn{3}{c}{\infer{s = (\SS,\vec{\tau}::\vec{\tau}',\vec{B}) \leadsto \bt(s)}{\vec{\tau}' \sqsubset \mbp(\SS) \qquad \text{or} \qquad \vec{\tau}' \sqsubseteq \mbp(\SS) \land |\vec{\tau}'| > 1}} & (\Covered) \\[0.5em]
      \multicolumn{3}{c}{\infer{s = (\SS,\vec{\tau}::\tau,\vec{B}) \leadsto \bt(s)}{\text{all transitions from $\mbp(\SS)$ are inactive} & \quad \tau \text{ is safe}}} & \mathllap{(\Backtrack)} \\[0.5em]
      \infer{(\SS,\vec{\tau},\vec{B}) \leadsto \unsafe}{\vec{\tau} \text{ is unsafe}} & \quad (\Refute) & \infer{(\SS,[],[B]) \leadsto \safe}{\text{all transitions from $\mbp(\SS)$ are inactive}} & (\Prove)
    \end{array}
  \]
  \vspace{-1em}
\end{definition}
We write $\overset{\textsc{I}}{\leadsto}$, $\overset{\textsc{S}}{\leadsto}$, $\ldots$ to indicate that the rule \Init, {\Step}, $\ldots$ was used.
{\Step} adds a transition to the trace.
When the trace has a recursive suffix, \Accelerate allows for learning a new transition which then replaces the recursive suffix on the trace, or we may backtrack via \Covered if the recursive suffix is redundant.
Note that \Covered does not apply if $\vec{\tau}' \sqsubseteq \mbp(\SS)$ and $|\vec{\tau}'| = 1$, as it could immediately undo every \Step, otherwise.
If no further {\Step} is possible, \Backtrack applies.
Note that \Backtrack and \Covered block the last transition from the trace so that we do not perform the same {\Step} again.
If $\vec{\tau}$ is an unsafe run, \Refute yields $\unsafe$, and if the entire
search space has been exhausted without finding an unsafe run (i.e., if all initial transitions are blocked), \Prove yields $\safe$.

The definition of ADCL in \cite{cav23} is more liberal than ours:
In our setting, \Accelerate may only be applied if the learned transition is non-redundant, and our definition of ``active transitions'' enforces that the first transition on the trace is always an initial transition.
In \cite{cav23}, these requirements are not enforced by the definition of ADCL, but by the definition of \emph{reasonable strategies} \cite[Def.\ 14]{cav23}.
For simplicity, we integrated these requirements into \Cref{def:calc}.
Additionally, \Covered should be preferred over
\Accelerate, and \Accelerate should be preferred over
{\Step}.

\begin{example}
  \label{ex:unsafe}
  We apply ADCL to a version of \Cref{ex:leading} with the additional transition
  \begin{equation}
    \tag{\protect{\ensuremath{\tau_{\err}}}}
    \label{eq:ex-err}
    \ell_1 \to \err \guard{x = y \land x > 2 \cdot z \land \eq{x} \land \eq{y} \land \eq{z}}.
  \end{equation}
  \scalebox{0.95}{
    \parbox{\textwidth}{
      \vspace{-1em}
      \begin{align*}
        \TT \overset{\textsc{I}}{\leadsto}^{\phantom{2}} {} & (\TT,[],[\emptyset]) \overset{\textsc{S}}{\leadsto}^2 (\TT,[\ref{eq:ex1-init},\ref{eq:ex1-rec}|_{\psi_{x < z}}],[\emptyset,\emptyset,\emptyset]) \tag{$x \leq 1 \land z \geq 5k \land y \leq z$} \\
        {} \overset{\textsc{A}}{\leadsto}^{\phantom{2}} {} & (\SS_1,[\ref{eq:ex1-init},\tau_{x<z}^+],[\emptyset,\emptyset,\{\tau_{x<z}^+\}]) \tag{$x \leq z \land z \geq 5k \land y \leq z$} \\
        {} \overset{\textsc{S}}{\leadsto}^{\phantom{2}} {} & (\SS_1,[\ref{eq:ex1-init},\tau_{x<z}^+,\ref{eq:ex1-rec}|_{\psi_{x \geq z}}],[\emptyset,\emptyset,\{\tau_{x<z}^+\},\emptyset]) \tag{$x = z + 1 \land z \geq 5k \land y \leq z + 1$} \\
        {} \overset{\textsc{A}}{\leadsto}^{\phantom{2}} {} & (\SS_2,[\ref{eq:ex1-init},\tau_{x<z}^+,\tau_{x \geq z}^+],[\emptyset,\emptyset,\{\tau_{x<z}^+\},\{\tau_{x \geq z}^+\}]) \tag{$x \geq y \land x > z \geq 5k \land y \leq 2 \cdot z + 1$} \\
        {} \overset{\textsc{S}}{\leadsto}^{\phantom{2}} {} & (\SS_2,[\ref{eq:ex1-init},\tau_{x<z}^+,\tau_{x \geq z}^+,\ref{eq:ex-err}],[\emptyset,\emptyset,\{\tau_{x<z}^+\},\{\tau_{x \geq z}^+\},\emptyset]) \tag{$x = 2 \cdot z + 1 = y \land z \geq 5k$} \\
        {} \overset{\textsc{R}}{\leadsto}^{\phantom{2}} {} & \unsafe
      \end{align*}
      \vspace{-1em}
    }}

  \noindent
  Here, $5k$ abbreviates $5000$ and:
  \begin{align*}
    \psi_{x < z} & {} \Def y \leq 2 \cdot z \land \inc{x} \land x < z \land \eq{y} \land \eq{z} & \psi_{x \geq z} & {} \Def y \leq 2 \cdot z \land \inc{x} \land x \geq z \land \inc{y} \land \eq{z} \\
    \tau_{x<z}^+ & {} \Def \mathrlap{\ell_1 \to \ell_1 \guard{y \leq 2 \cdot z \land n > 0 \land x' = x + n \land x + n \leq z \land \eq{y} \land \eq{z}}} \\
    \tau_{x \geq z}^+ & {} \Def \mathrlap{\ell_1 \to \ell_1 \guard{y + n - 1 \leq 2 \cdot z \land n > 0 \land x' = x + n \land x \geq z \land y' = y + n \land \eq{z}}} \\
    \SS_1 & {} \Def \TT \cup \{\tau_{x<z}^+\} & \SS_2 & {} \Def \SS_1 \cup \{\tau_{x \geq z}^+\}
  \end{align*}

  \noindent
  On the right, we show formulas describing the configurations that are reachable with the current trace.
  Every $\leadsto$-derivation starts with \Init.
  The first two {\Step}s add the initial transition \ref{eq:ex1-init} and an element of $\mbp(\ref{eq:ex1-rec})$ to the trace.
  Since $x < z$ holds after applying \ref{eq:ex1-init}, the only possible choice for the latter is $\ref{eq:ex1-rec}|_{\psi_{x < z}}$.

  As $\ref{eq:ex1-rec}|_{\psi_{x < z}}$ is recursive, it is accelerated and replaced with $\accel(\ref{eq:ex1-rec}|_{\psi_{x < z}}) = \tau^+_{x<z}$, which simulates $n$ steps with $\ref{eq:ex1-rec}|_{\psi_{x < z}}$.
  Moreover, $\tau_{x<z}^+$ is also added to the current set of blocking transitions, as we
  always have ${\longrightarrow^2_{\tau}} \subseteq {\longrightarrow_{\tau}}$ for learned
  transitions $\tau$ and thus adding them to the trace twice in a row is pointless.

  Next, \ref{eq:ex1-rec} is applicable again.
  As neither $x < z$ nor $x \geq z$ holds for all reachable configurations, we could continue with any element of $\mbp(\ref{eq:ex1-rec}) = \{\ref{eq:ex1-rec}|_{\psi_{x < z}},\ref{eq:ex1-rec}|_{\psi_{x \geq z}}\}$.
  We choose $\ref{eq:ex1-rec}|_{\psi_{x \geq z}}$, so that the recursive transition $\ref{eq:ex1-rec}|_{\psi_{x \geq z}}$ can be accelerated to $\tau^+_{x \geq z}$.
  Then \ref{eq:ex-err} applies, and the proof is finished via \Refute.
\end{example}
For our purposes, the most important property of ADCL is the following.
\begin{restatable}{theorem}{correct}
  \label{thm:correct}
  If $\TT \leadsto^* (\SS,\vec{\tau},\vec{B})$ and $\vec{\tau}$ is non-empty, then  $\cond(\vec{\tau}) \not \equiv_\AA \bot$ and ${\longrightarrow_{\vec{\tau}}} \subseteq {\longrightarrow^+_\TT}$.
  So if $\TT \leadsto^* \unsafe$, then $\TT$ is unsafe.
\end{restatable}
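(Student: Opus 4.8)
The plan is to prove the stronger, more useful statement by induction on the length of the derivation $\TT \leadsto^* (\SS,\vec{\tau},\vec{B})$, carrying along the following invariant about every reachable state $(\SS,\vec{\tau},\vec{B})$: \textbf{(a)} ${\longrightarrow_{\tau'}} \subseteq {\longrightarrow^+_\TT}$ for every $\tau' \in \SS$; and \textbf{(b)} if $\vec{\tau} \neq []$, then its first transition is initial and $\cond(\vec{\tau}) \not\equiv_\AA \bot$ (i.e.\ $\vec{\tau}$ is a finite run of $\mbp(\SS)$). The theorem is immediate from this invariant: since $\psi \equiv_\AA \bigvee \mbp(\psi)$, we have ${\longrightarrow_{\tau''}} \subseteq {\longrightarrow_{\tau'}}$ whenever $\tau'' \in \mbp(\tau')$, so by (a) every $\tau'' \in \mbp(\SS)$ satisfies ${\longrightarrow_{\tau''}} \subseteq {\longrightarrow^+_\TT}$; as $\vec{\tau} \in \mbp(\SS)^*$, composing these inclusions and using transitivity of ${\longrightarrow^+_\TT}$ yields ${\longrightarrow_{\vec{\tau}}} \subseteq {\longrightarrow^+_\TT}$, while (b) gives $\cond(\vec{\tau}) \not\equiv_\AA \bot$.

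For the induction, the only first step applicable to $\TT$ is \Init, producing $(\TT,[],[\emptyset])$, where (a) holds because ${\longrightarrow_{\tau'}} \subseteq {\longrightarrow_\TT} \subseteq {\longrightarrow^+_\TT}$ for $\tau' \in \TT$, and (b) is vacuous. For the inductive step I case on the applied rule. \Step leaves $\SS$ untouched, so (a) is inherited, and by the definition of an \emph{active} transition the extended trace is a run, giving (b). \Covered and \Backtrack likewise leave $\SS$ untouched and replace the trace by a (possibly empty) prefix of itself obtained via $\bt$; a non-empty prefix of a run is again a run (same leading initial transition, and any execution witnessing $\cond$ of the run restricts to one witnessing $\cond$ of the prefix), so (b) persists. \Refute and \Prove introduce no new state, so nothing has to be checked for them.

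The substantial case is \Accelerate, which adds $\tau = \accel(\vec{\tau}^{\circlearrowleft})$ to $\SS$ for a recursive suffix $\vec{\tau}^{\circlearrowleft}$ of the current trace. For (a): each transition of $\vec{\tau}^{\circlearrowleft}$ lies in $\mbp(\SS)$ and is hence conjunctive, and chaining preserves conjunctivity, so $\chain(\vec{\tau}^{\circlearrowleft})$ is a recursive conjunctive transition on which $\accel$ is defined; thus ${\longrightarrow_\tau} = {\longrightarrow^+_{\chain(\vec{\tau}^{\circlearrowleft})}} = \bigl({\longrightarrow_{\vec{\tau}^{\circlearrowleft}}}\bigr)^{+}$. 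By the (lifted) induction hypothesis ${\longrightarrow_{\vec{\tau}^{\circlearrowleft}}} \subseteq {\longrightarrow^+_\TT}$, so ${\longrightarrow_\tau} \subseteq \bigl({\longrightarrow^+_\TT}\bigr)^{+} = {\longrightarrow^+_\TT}$, and the other transitions of $\SS$ keep their status. For (b): the recursive suffix cannot be the whole trace, since the trace's first transition is initial (IH), so $\chain$ of the whole trace would run from $\init$ to $\init$, contradicting the convention that no transition targets $\init$; hence a non-empty prefix $\vec{\tau}$ with initial leading transition remains and the new trace is $\vec{\tau}::\tau$. Finally ${\longrightarrow_{\vec{\tau}::\tau}} \supseteq {\longrightarrow_{\vec{\tau}}} \circ {\longrightarrow_{\vec{\tau}^{\circlearrowleft}}} \neq \emptyset$, because $\bigl({\longrightarrow_{\vec{\tau}^{\circlearrowleft}}}\bigr)^{+} \supseteq {\longrightarrow_{\vec{\tau}^{\circlearrowleft}}}$ and the old trace was a run, so $\cond(\vec{\tau}::\tau) \not\equiv_\AA \bot$; together with $\tau$ being conjunctive (hence $\tau \in \mbp(\SS \cup \{\tau\})$), this gives (b).

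It remains to derive the last sentence. A derivation $\TT \leadsto^* \unsafe$ ends with \Refute applied to some state $(\SS,\vec{\tau},\vec{B})$ whose trace $\vec{\tau}$ is unsafe, i.e.\ $\chain(\vec{\tau})$ targets $\err$; $\vec{\tau}$ is non-empty, so by the proven part $\cond(\vec{\tau}) \not\equiv_\AA \bot$ (yielding some $\init(\vec{s}) \longrightarrow_{\vec{\tau}} \err(\vec{t})$, since its first transition is initial) and ${\longrightarrow_{\vec{\tau}}} \subseteq {\longrightarrow^+_\TT}$, whence $\init(\vec{s}) \longrightarrow^+_\TT \err(\vec{t})$; unfolding this transitive closure yields transitions $\rho_1,\dots,\rho_m \in \TT$ realizing it, and $[\rho_1,\dots,\rho_m]$ is then an unsafe finite run of $\TT$, so $\TT$ is not safe. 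I expect \Accelerate to be the main obstacle: it is where soundness of $\accel$, correctness of $\chain$, and absorption of composition into ${\longrightarrow^+_\TT}$ must be combined, and where one must not overlook that the accelerated suffix is necessarily proper — precisely the point that keeps the updated trace a genuine run.
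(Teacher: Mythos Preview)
Your proof is correct and follows essentially the same approach as the paper: the paper's (very terse) argument establishes the invariants ${\longrightarrow^+_\TT} = {\longrightarrow^+_\SS}$, satisfiability of $\cond(\vec{\tau})$, and initiality of the first trace element by deferring the details to the corresponding proof in \cite{cav23}, whereas you spell out the induction explicitly (your invariant (a) is equivalent to ${\longrightarrow^+_\SS} \subseteq {\longrightarrow^+_\TT}$, and (b) is precisely the other two invariants). Your careful treatment of \Accelerate, including the observation that the recursive suffix must be proper because no transition targets $\init$, is exactly the content that the paper outsources to the earlier work.
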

\makeproof*{thm:correct}{
  \correct*
  \begin{proof}
    Analogously to the corresponding proof from \cite{cav23}, $\TT \leadsto^* (\SS,\vec{\tau},\vec{B})$ implies ${\longrightarrow^+_\TT} = {\longrightarrow^+_\SS}$.
    Moreover, \Cref{def:calc} ensures $\vec{\tau} \in \SS^*$.
    Furthermore, satisfiability of $\cond(\vec{\tau})$ is an invariant of $\leadsto$.
    Finally, the definition of \Step ensures that the first element of $\vec{\tau}$ is an initial transition.
    Thus, the claim follows. \qed
  \end{proof}
}

The other properties of ADCL that were shown in \cite{cav23} immediately carry over to our setting, too:
if $\TT \leadsto^* \safe$, then $\TT$ is safe;
if $\TT$ is unsafe, then $\TT \leadsto^* \unsafe$;
in general, $\leadsto$ does not terminate.
The proofs are analogous to \cite{cav23}.

\section{Proving Non-Termination with ADCL-NT}
\label{sec:nonterm}

From now on, we assume that the analyzed TS $\TT$ does not contain unsafe transitions.
To prove non-termination, we look for a corresponding \emph{certificate}.

\begin{definition}[Certificate of Non-Termination]
  \label{def:certificate}
  Let $\tau = \ell \to \ell \guard{\ldots}$.
  A satisfiable formula $\psi$ \emph{certifies non-ter\-mi\-na\-tion of $\tau$}, written
  $\psi \models_\AA^\infty \tau$, if for any model $\sigma$ of $\psi$,
  there is an infinite sequence
  $
    \ell(\sigma(\vec{x})) = \state{s}_1 \longrightarrow_\tau \state{s}_2 \longrightarrow_\tau \ldots
  $
\end{definition}
There exist many techniques for finding certificates of non-termination automatically, see \Cref{sec:experiments}.
However, \Cref{def:certificate} has several shortcomings.
First, the problem of finding such certificates becomes very challenging if $\cond(\tau)$ contains disjunctions.
Second, it is insufficient to consider a single transition when only non-singleton sequences $\vec{\tau}$ such that $\chain(\vec{\tau})$ is recursive admit non-terminating runs.
Third, just finding a certificate $\psi$ of non-termination for some $\vec{\tau} \in \TT^*$ does
not suffice for proving non-termination of $\TT$.
Additionally, a proof that the pre-image of $\longrightarrow_{\vec{\tau}|_\psi}$ is reachable from an initial configuration is required.
All of these problems can be solved by integrating the search for certificates of non-termination into the ADCL calculus.

\begin{definition}[ADCL-NT]
  \label{def:calc-nonterm}
  To prove non-termination, we extend ADCL with the rule \Nonterm and modify \Covered as shown below.
  We write $\leadsto_\nt$ for the relation defined by the (modified) rules from \Cref{def:calc} and \Nonterm.
  \[
  \begin{array}{c@{\qquad}r}
    \infer{
      s = (\SS,\vec{\tau}::\vec{\tau}^\circlearrowleft,\vec{B}) \leadsto_\nt \bt(s)
      }{\vec{\tau}^\circlearrowleft \text{ is recursive} & \quad \vec{\tau}^\circlearrowleft \sqsubset \mbp(\SS) \text{ or } \vec{\tau}^\circlearrowleft \sqsubseteq \mbp(\SS) \land |\vec{\tau}^\circlearrowleft| > 1} & (\Covered) \\[0.5em]
    \infer{
      (\SS,\vec{\tau}::\vec{\tau}^\circlearrowleft,\vec{B}) \leadsto_\nt (\SS \cup \{\tau\}, \vec{\tau}::\vec{\tau}^\circlearrowleft,\vec{B})
      }{\chain(\vec{\tau}^\circlearrowleft) = \ell \to \ell \guard{\ldots} & \psi \models_\AA^\infty \vec{\tau}^\circlearrowleft & \tau = \ell \to \err \guard{\psi} \not\sqsubseteq \mbp(\SS)} & (\Nonterm)
  \end{array}
  \]
\end{definition}
So the idea of \Nonterm is to apply a technique which searches for a certificate of non-termination to a recursive suffix of the trace.
Apart from introducing \Nonterm, we restricted \Covered to recursive suffixes.
The reason is that backtracking when the trace has a redundant, non-recursive suffix may prevent us from analyzing loops, resulting in a precision issue.
\begin{example}
  Let $\TT \Def \{\tau_{\mathsf{i}}, \tau'_{\mathsf{i}}, \tau_\ell,\tau_{\ell'}\}$ where
  \begin{gather*}
    \tau_{\mathsf{i}} \Def \init \to \ell \guard{\top} \quad \tau_{\mathsf{i}}' \Def \init \to \ell' \guard{\top} \quad
    \tau_\ell \Def \ell \to \ell' \guard{\top} \quad \tau_{\ell'} \Def \ell' \to \ell \guard{\top}
  \end{gather*}
  and $\top$ means \emph{true}.
  Due to the loop $\ell \longrightarrow_{\tau_\ell} \ell' \longrightarrow_{\tau_{\ell'}} \ell$, $\TT$ is clearly non-terminating.
  Without requiring that $\vec{\tau}^\circlearrowleft$ is recursive in \Covered, $\TT$ can be analyzed as follows:
  \begin{align*}
    \TT & {} \overset{\textsc{I}}{\leadsto}_\nt (\TT,[],[\emptyset]) \overset{\textsc{S}}{\leadsto}_\nt^2 (\TT,[\tau_{\mathsf{i}},\tau_\ell],\emptyset^3) \overset{\textsc{C}}{\leadsto}_\nt (\TT,[\tau_{\mathsf{i}}],[\emptyset,\{\tau_\ell\}]) \overset{\textsc{B}}{\leadsto}_\nt (\TT,[],[\{\tau_{\mathsf{i}}\}])
    \\
        & {} \overset{\textsc{S}}{\leadsto}_\nt^2 (\TT,[\tau_{\mathsf{i}}',\tau_{\ell'}],\{\tau_{\mathsf{i}}\} ::\emptyset^2) \overset{\textsc{C}}{\leadsto}_\nt (\TT,[\tau_{\mathsf{i}}'],[\{\tau_{\mathsf{i}}\},\{\tau_{\ell'}\}]) \overset{\textsc{B}}{\leadsto}_\nt (\TT,[],[\{\tau_{\mathsf{i}},\tau'_{\mathsf{i}}\}]) \overset{\textsc{P}}{\leadsto}_\nt \safe
  \end{align*}
  The $1^{st}$ application of \Covered is possible as $[\tau_{\mathsf{i}},\tau_\ell] \sqsubseteq \tau_{\mathsf{i}}'$ and the $2^{nd}$ application of \Covered is possible as $[\tau'_{\mathsf{i}},\tau_{\ell'}] \sqsubseteq \tau_{\mathsf{i}}$.
  Note that the trace never contains both $\tau_\ell$ and $\tau_{\ell'}$, but both transitions are needed to prove non-termination.
\end{example}

Recall the shortcomings of \Cref{def:certificate} mentioned above.
First, due to the use of syntactic implicants, ADCL-NT reduces reasoning about arbitrary transitions to reasoning about conjunctive transitions.
Second, as \Nonterm considers a suffix $\vec{\tau}^\circlearrowleft$ of the trace, it can prove non-termination of sequences of transitions.
Third, ADCL's capability to prove reachability directly carries over to our goal of
proving non-termination.
So in contrast to most other approaches (see \Cref{sec:experiments}), ADCL-NT does not have to resort to other tools or techniques for proving reachability.

We only search for a certificate of non-termination for $\vec{\tau}^\circlearrowleft$ if ADCL-NT established reachability of the pre-image of $\longrightarrow_{\vec{\tau}^\circlearrowleft}$ beforehand.
Note, however, that this does not imply reachability of the pre-image of
$\longrightarrow_{\ell \to \err \guard{\psi}}$, as $\psi$ entails $\cond(\vec{\tau}^\circlearrowleft)$, but
not the other way around.
Hence, we cannot directly derive non-termination of $\TT$ when \Nonterm applies.
Regarding the strategy for $\leadsto_\nt$, one should try to use \Nonterm once for each recursive suffix of the trace.

\begin{example}
  \label{ex:nonterm}
  Reconsider \Cref{ex:leading}.
  Up to (excluding) the second-last step, the der\-i\-va\-tion from \Cref{ex:unsafe} remains unchanged.
  Then we get
  \begin{align*}
    & (\SS_2,[\ref{eq:ex1-init},\tau_{x<z}^+,\tau_{x \geq z}^+],[\dots]) \tag{$x \geq y \land x > 5k$} \\
    {} \overset{\textsc{S}}{\leadsto}^4_\nt
    {} & (\SS_2,[\ref{eq:ex1-init},\tau_{x<z}^+,\tau_{x \geq z}^+,\ref{eq:ex1-nonrec},\ref{eq:ex1-eq},\ref{eq:ex1-neq}|_{\psi_{x>y}},\ref{eq:ex1-neq}|_{\psi_{x<y}}],[\ldots]) \tag{$1 \equiv_2 y = x > 10k$} \\
    {} \overset{\textsc{N}}{\leadsto}_\nt {} & (\SS_3,[\ref{eq:ex1-init},\tau_{x<z}^+,\tau_{x \geq z}^+,\ref{eq:ex1-nonrec},\ref{eq:ex1-eq},\ref{eq:ex1-neq}|_{\psi_{x>y}},\ref{eq:ex1-neq}|_{\psi_{x<y}}],[\ldots]) \tag{$1 \equiv_2 y = x > 10k$} \\
    {} \overset{\textsc{S}}{\leadsto}_\nt {} & (\SS_3,[\ref{eq:ex1-init},\tau_{x<z}^+,\tau_{x \geq z}^+,\ref{eq:ex1-nonrec},\ref{eq:ex1-eq},\ref{eq:ex1-neq}|_{\psi_{x>y}},\ref{eq:ex1-neq}|_{\psi_{x<y}},\tau_\err],[\ldots]) \overset{\textsc{R}}{\leadsto}_\nt {} \unsafe
  \end{align*}
  \vspace{-2em}
  \begin{align*}
    \text{where} \quad \psi_{x>y} & {} \Def x > 0 \land y > 0 \land x' = y \land x > y \land y' = x & \tau_\err & {} \Def \ell_2 \to \err \guard{x = y > 1} \\
    \psi_{x<y} & {} \Def x > 0 \land y > 0 \land x' = y \land x < y \land \eq{y} & \SS_3 & {} \Def \SS_2 \cup \{\tau_\err\}
  \end{align*}
The formulas on the right describe the values of $x$ and $y$ that are reachable with the current trace, where $1 \equiv_2 y$ means that $y$ is odd.
  After the first \Step with \ref{eq:ex1-nonrec}, just \ref{eq:ex1-eq} can be used, as $\cond(\ref{eq:ex1-nonrec})$ implies $x' = y'$.
  While \ref{eq:ex1-eq} is recursive, \Accelerate cannot be applied next, as ${{\longrightarrow_{\ref{eq:ex1-eq}}} = {\longrightarrow_{\ref{eq:ex1-eq}}^+}}$, so the learned transition would be redundant.
  Thus, we continue with \ref{eq:ex1-neq}, projected to $x>y$ (as $\cond(\ref{eq:ex1-eq})$ implies $x' = y'+1$).
  Again, all transitions that could be learned are redundant, so \Accelerate does not apply.
  We next use \ref{eq:ex1-neq} projected to $x<y$, as the previous \Step swapped $x$ and $y$.
  As the suffix
  $[\ref{eq:ex1-eq},\ref{eq:ex1-neq}|_{\psi_{x>y}},\ref{eq:ex1-neq}|_{\psi_{x<y}}]$ of the
  trace does not terminate (see \Cref{ex:leading}), \Nonterm applies.
  So we learn the transition $\tau_\err$, which is added to the trace to finish the proof, afterwards.
\end{example}

\vspace*{-.2cm}

\begin{restatable}{theorem}{correctnt}
  \label{thm:sound-nt}
  If $\TT \leadsto_\nt^* \unsafe$, then $\TT$ does not terminate.
\end{restatable}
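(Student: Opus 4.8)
The plan is to reconstruct, from a derivation $\TT \leadsto_\nt^* \unsafe$, an explicit infinite $\longrightarrow_\TT$-run. Such a derivation must end with \Refute applied to a state $(\SS,\vec{\tau},\vec{B})$ where $\vec{\tau}$ is an unsafe run; as in the proof of \Cref{thm:correct}, the invariants ``$\vec{\tau}$ is a run of $\SS$'' and ``the first element of $\vec{\tau}$ is an initial transition'' carry over to $\leadsto_\nt$ (the modified \Covered only shortens the trace, and \Nonterm leaves it untouched). So $\vec{\tau} = \vec{\tau}' :: \bar\tau$ with $\cond(\vec{\tau}) \not\equiv_\AA \bot$ and $\bar\tau$ targeting $\err$. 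My first step is to argue that $\bar\tau$ stems from \Nonterm: since $\TT$ has no unsafe transitions, $\mbp(\TT)$ contains only safe transitions, and \Accelerate learns only recursive — hence safe — transitions (its output has the form $\ell_0 \to \ell_0 \guard{\ldots}$ with $\ell_0 \neq \err$, because w.l.o.g.\ no transition starts in $\err$ and the accelerated suffix is recursive). Hence every $\err$-targeting transition in $\mbp(\SS)$ is a syntactic implicant $\ell \to \err \guard{\psi'}$ of a transition $\ell \to \err \guard{\psi}$ added by \Nonterm, where $\psi \models_\AA^\infty \vec{\tau}^\circlearrowleft$ and $\chain(\vec{\tau}^\circlearrowleft) = \ell \to \ell \guard{\ldots}$ held when \Nonterm fired. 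Since $\psi' \models_\AA \psi$ (the property of $\mbp$ underlying $\psi \equiv_\AA \bigvee \mbp(\psi)$), we still have $\psi' \models_\AA^\infty \vec{\tau}^\circlearrowleft$.

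Next I would prove, by induction on the length of the $\leadsto_\nt$-derivation, the invariant that ${\longrightarrow_{\tau^*}} \subseteq {\longrightarrow_\TT^+}$ for every \emph{safe} $\tau^* \in \mbp(\SS)$. This is essentially the argument behind \Cref{thm:correct}, but restricted to safe transitions, which makes \Nonterm harmless: the transitions it adds (and their implicants) are never safe. The inductive step for \Accelerate uses that a recursive suffix $\vec{\tau}^\circlearrowleft$ of the trace contains only safe transitions — an $\err$-targeting transition can occur only as the last element of a run (nothing starts in $\err$), which is incompatible with recursiveness — so by the induction hypothesis ${\longrightarrow_{\vec{\tau}^\circlearrowleft}} \subseteq {\longrightarrow_\TT^+}$ and thus ${\longrightarrow_{\accel(\vec{\tau}^\circlearrowleft)}} = {\longrightarrow_{\vec{\tau}^\circlearrowleft}^+} \subseteq {\longrightarrow_\TT^+}$. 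The same observation shows that the suffix $\vec{\tau}^\circlearrowleft$ from \Nonterm and the prefix $\vec{\tau}'$ of the final trace (which precedes the single $\err$-transition $\bar\tau$) consist only of safe transitions, hence ${\longrightarrow_{\vec{\tau}'}} \subseteq {\longrightarrow_\TT^+}$ and ${\longrightarrow_{\vec{\tau}^\circlearrowleft}} \subseteq {\longrightarrow_\TT^+}$.

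Finally I would assemble the run. A model of $\cond(\vec{\tau})$ witnesses configurations $\init(\vec{s}_0) \longrightarrow_{\vec{\tau}'} \ell(\vec{s}) \longrightarrow_{\bar\tau} \err(\vec{t})$, with $\vec{\tau}'$ non-empty since its first transition is initial and $\ell \neq \init$. From $\ell(\vec{s}) \longrightarrow_{\bar\tau} \err(\vec{t})$ I extract a model $\rho$ of $\psi'$ with $\rho(\vec{x}) = \vec{s}$, so $\psi' \models_\AA^\infty \vec{\tau}^\circlearrowleft$ yields an infinite sequence $\ell(\vec{s}) = \state{s}_1 \longrightarrow_{\vec{\tau}^\circlearrowleft} \state{s}_2 \longrightarrow_{\vec{\tau}^\circlearrowleft} \cdots$. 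Replacing each $\longrightarrow_{\vec{\tau}'}$- and $\longrightarrow_{\vec{\tau}^\circlearrowleft}$-step by the corresponding $\longrightarrow_\TT^+$-segment and concatenating gives an infinite $\longrightarrow_\TT$-run from $\init(\vec{s}_0)$, so $\TT$ does not terminate.

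I expect the main obstacle to be the second step: pinning down precisely which transitions of $\SS$ may be unsafe, showing this characterization is stable under $\mbp(\cdot)$ and under chaining, and verifying that every recursive suffix passed to \Nonterm — and the whole trace up to the final $\err$-transition — is built exclusively from safe transitions. Only then can the soundness reasoning of \Cref{thm:correct} be applied to exactly those parts, while the lone $\err$-transition is discharged via its certificate of non-termination.
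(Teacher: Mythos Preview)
Your proposal is correct and follows essentially the same route as the paper: identify the final $\err$-transition as originating from \Nonterm, use the certificate to obtain an infinite $\longrightarrow_{\vec{\tau}^\circlearrowleft}$-sequence from a reachable configuration, and lift this to $\longrightarrow_\TT^+$ via the soundness invariant for learned transitions. The paper packages the reachability-and-invariant part into a separate lemma (\Cref{lem:run}) stating that for any non-empty \emph{safe} trace $\vec{\tau}$ in a $\leadsto_\nt$-reachable state, ${\longrightarrow_{\vec{\tau}}} \subseteq {\longrightarrow_\TT^+}$; your inductive argument over safe elements of $\mbp(\SS)$ is precisely what proves that lemma. You are in fact more careful than the paper on two points: you distinguish the learned transition $\ell \to \err\guard{\psi}$ from its syntactic implicant $\ell \to \err\guard{\psi'}$ that actually appears on the trace (and argue $\psi' \models_\AA \psi$, so the certificate still applies), and you explicitly restrict the invariant ${\longrightarrow_\tau} \subseteq {\longrightarrow_\TT^+}$ to safe transitions, whereas the paper writes ${\longrightarrow_\SS^+} = {\longrightarrow_\TT^+}$ without qualification, which is literally false once \Nonterm has added $\err$-transitions but harmless for the specific $\ell \to \ell$ steps being converted.
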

\makeproof*{thm:sound-nt}{
  \correctnt*
  \begin{proof}
    We have
    \begin{gather*}
      \TT \leadsto_\nt^* (\SS,\vec{\tau},\vec{B}) \overset{\textsc{S}}{\leadsto}_\nt (\SS,\vec{\tau}::(\ell \to \err \guard{\psi}),\vec{B}::\emptyset) \overset{\textsc{R}}{\leadsto}_\nt \unsafe.
    \end{gather*}
    Then by \Cref{lem:run} and the definition of \Step, there are  $\vec{s},\vec{t} \in \CC_\AA^d$ such that $\init(\vec{s}) \longrightarrow_{\TT}^* \ell(\vec{t})$ and $\psi[\vec{x} / \vec{t}]$ is satisfiable.

    As we assumed that $\TT$ does not contain any transitions to $\err$, it follows that $\ell \to \err \guard{\psi}$ was learned via \Nonterm.
    Hence, $\psi$ is a certificate of non-ter\-mi\-na\-tion for some $\vec{\tau}^\circlearrowleft \in \SS^+$ such that $\chain(\vec{\tau}^\circlearrowleft) = \ell \to \ell \guard{\ldots}$.
    Thus, by \Cref{def:certificate}, we have
    \begin{gather*}
      \init(\vec{s}) \longrightarrow^*_\TT \ell(\vec{t}) \longrightarrow^+_{\SS} \ell(\vec{t}_1) \longrightarrow^+_{\SS} \ell(\vec{t}_2) \longrightarrow^+_{\SS} \ldots
    \end{gather*}
    As ${\longrightarrow^+_{\SS}} = {\longrightarrow^+_{\TT}}$, the claim follows. \qed
  \end{proof}
}

While \Cref{thm:sound-nt} establishes the soundness of our approach, we now investigate completeness.
In contrast to ADCL for safety (\Cref{sec:ADCL}), ADCL-NT is not refutationally complete,
but the proof is non-trivial.
So in the following, we show that there are non-terminating TSs $\TT$ where $\TT \not\leadsto^*_\nt
\unsafe$.
To prove incompleteness, we adapt the construction from the proof that ADCL does not terminate \cite[Thm.~18]{cav23}.
There, states $(\SS, \vec{\tau}, \vec{B})$ were extended by a component
$\LL$ that maps every element of $\mbp(\SS)$ to a regular language over $\mbp(\TT)$.
However, the proof of \cite[Thm.~18]{cav23} just required reasoning about finite (prefixes of infinite) runs, but we have to reason about infinite runs.
So in our setting $\LL$ maps each element $\tau$ of $\mbp(\SS)$ to a regular or an
$\omega$-regular language over $\mbp(\TT)$,
i.e.,  $\LL(\tau) \subseteq \mbp(\TT)^*$ or 
$\LL(\tau) \subseteq \mbp(\TT)^\omega$.
We lift $\LL$ from $\mbp(\SS)$ to  sequences of transitions as follows.
\begin{gather*}
  \LL(\epsilon) \Def \epsilon \qquad \qquad \qquad \LL(\vec{\tau}::\tau) \Def \LL(\vec{\tau})::\LL(\tau) \quad \text{if} \quad \LL(\tau) \subseteq \mbp(\tau)^*
\end{gather*}
Here, ``$::$'' denotes language concatenation (i.e., $\LL_1 :: \LL_2 = \{ \tau_1 :: \tau_2
\mid 
\tau_1 \in \LL_1, \tau_2 \in \LL_2 \}$)
and
we only consider sequences where $\LL(\tau)$
is regular (not $\omega$-regular)
to ensure that $\LL$ is well defined.
So while we lift other notations to sequences of transitions via
chaining, $\LL(\vec{\tau})$ does \emph{not} stand for $\LL(\chain(\vec{\tau}))$.

\def\scale{0.97}
\begin{definition}[ADCL-NT with Regular Languages]
  \label{def:calc-reg}
  We extend states by a fourth component $\LL$, and adapt \Init, \Accelerate, and \Nonterm as
  follows:
  \[
    \begin{array}{cr}
      \scalebox{\scale}{
      \infer{
      \TT \leadsto_\nt (\TT,[],[\emptyset],\LL)
      }{\LL(\tau) = \{\tau\} \text{ for all $\tau \in \mbp(\TT)$}}
      } & (\Init)\\[0.5em]
      \scalebox{\scale}{
      \infer{
      (\SS,\vec{\tau}::\vec{\tau}^\circlearrowleft,\vec{B}::\vec{B}^\circlearrowleft,\LL) \leadsto_\nt (\SS \cup \{\tau\},\vec{\tau}::\tau,\vec{B}::\{\tau\},\LL \uplus (\tau \mapsto \LL(\vec{\tau}^\circlearrowleft)^+))
      }{\vec{\tau}^\circlearrowleft \text{ is recursive} \qquad & |\vec{\tau}^\circlearrowleft| = |\vec{B}^\circlearrowleft| \qquad & \accel(\vec{\tau}^\circlearrowleft) = \tau \not\sqsubseteq \mbp(\SS)}} & (\Accelerate)\\[0.5em]
      \scalebox{\scale}{
      \infer{
      (\SS,\vec{\tau}::\vec{\tau}^\circlearrowleft,\vec{B},\LL) \leadsto_\nt (\SS \cup \{\tau\}, \vec{\tau}::\vec{\tau}^\circlearrowleft,\vec{B},\LL \uplus (\tau \mapsto \LL(\vec{\tau}^\circlearrowleft)^\omega))
      }{\chain(\vec{\tau}^\circlearrowleft) = \ell \to \ell \guard{\ldots} & \psi \models_\AA^\infty \vec{\tau}^\circlearrowleft & \tau = \ell \to \err \guard{\psi} \not\sqsubseteq \mbp(\SS)}} & (\Nonterm)
    \end{array}
  \]
  All other rules from \Cref{def:calc} leave the last component of the state unchanged.
\end{definition}
Here, $\LL(\pi)^+ \Def \bigcup_{n \in \NN_{\geq 1}} \LL(\pi)^n$, and $\LL(\pi)^\omega$ is the $\omega$-regular language consisting of all words that result from concatenating infinitely many elements of $\LL(\pi) \setminus \{\epsilon\}$.

In \Accelerate and \Nonterm,
$\chain(\vec{\tau}^\circlearrowleft)$ is recursive.
Thus, $\vec{\tau}^\circlearrowleft$ does not contain unsafe transitions.
Hence, $\LL(\vec{\tau}^\circlearrowleft)$
and thus also $\LL(\vec{\tau}^\circlearrowleft)^+$
 are well defined and regular, and
$\LL(\vec{\tau}^\circlearrowleft)^\omega$ is $\omega$-regular.
Moreover, the use of ``$\uplus$''
is justified by the condition $\tau \not\sqsubseteq \mbp(\SS)$.
The next lemma states two crucial properties about $\LL$.

\begin{restatable}{lemma}{lang}
  \label{lem:lang}
  Assume  $\TT \leadsto_\nt^* (\SS,\vec{\tau},\vec{B},\LL)$ and let $\tau = (\ell \to \ell' \guard{\psi}) \in \mbp(\SS)$.
  \begin{itemize}
  \item[$\bullet$] If $\LL(\tau) \subseteq \mbp(\TT)^*$, then ${\longrightarrow_{\tau}} = \bigcup_{\vec{\tau} \in \LL(\tau)}{\longrightarrow_{\vec{\tau}}}$.
  \item[$\bullet$] If $\LL(\tau) \subseteq \mbp(\TT)^\omega$, then
for every model $\sigma$ of $\psi$, there is an infinite sequence
     $\ell(\sigma(\vec{x})) = \state{s}_1 \longrightarrow_{\tau_1} \state{s}_2
\longrightarrow_{\tau_2} \ldots$ where  $[\tau_1, \tau_2, \ldots]
\in  \LL(\tau)$.
  \end{itemize}
\end{restatable}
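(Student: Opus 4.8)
The plan is to prove the two bullets simultaneously by induction on the length of the $\leadsto_\nt$-derivation $\TT \leadsto_\nt^* (\SS,\vec{\tau},\vec{B},\LL)$. The invariant I would maintain is precisely the conjunction of the two statements in the lemma, quantified over all $\tau \in \mbp(\SS)$. The base case is \Init, where $\SS = \TT$, so $\mbp(\SS) = \mbp(\TT)$ and $\LL(\tau) = \{\tau\}$ for every $\tau \in \mbp(\TT)$; then the first bullet holds trivially since $\bigcup_{\vec{\tau}' \in \{\tau\}} {\longrightarrow_{\vec{\tau}'}} = {\longrightarrow_\tau}$, and the second bullet is vacuous because no $\LL(\tau)$ is $\omega$-regular. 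For the induction step, the only rules that change $\SS$ or $\LL$ are \Accelerate and \Nonterm; all other rules leave both components (restricted to $\mbp(\SS)$) untouched, and since $\mbp$ is monotone and the claim only constrains transitions already in $\mbp(\SS)$, the invariant is preserved trivially by \Step, \Covered, \Backtrack, and \Refute.

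The substantive work is in the two cases \Accelerate and \Nonterm, where a single new transition $\tau$ is added to $\SS$ with $\mbp$-image contributing the conjunctive variants $\mbp(\tau)$, and $\LL$ is extended by $\tau \mapsto \LL(\vec{\tau}^\circlearrowleft)^+$ (resp.\ $\tau \mapsto \LL(\vec{\tau}^\circlearrowleft)^\omega$). For \Accelerate: by the induction hypothesis applied to each transition occurring in $\vec{\tau}^\circlearrowleft$ (all of which lie in $\mbp(\SS)$ before the step and have regular $\LL$-image, since $\vec{\tau}^\circlearrowleft$ is recursive hence safe), we get ${\longrightarrow_{\vec{\tau}^\circlearrowleft}} = \bigcup_{\vec{w} \in \LL(\vec{\tau}^\circlearrowleft)} {\longrightarrow_{\vec{w}}}$; this requires first checking that $\LL$ lifted to the sequence $\vec{\tau}^\circlearrowleft$ via language concatenation indeed captures ${\longrightarrow_{\vec{\tau}^\circlearrowleft}}$, which follows from ${\longrightarrow_{\chain(\cdot,\cdot)}} = {\longrightarrow_{(\cdot)}} \circ {\longrightarrow_{(\cdot)}}$ and the IH for the individual components. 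Then, since $\accel(\vec{\tau}^\circlearrowleft) = \tau$ gives ${\longrightarrow_\tau} = {\longrightarrow_{\vec{\tau}^\circlearrowleft}^+}$ by \Cref{def:accel}, and $\LL(\vec{\tau}^\circlearrowleft)^+ = \bigcup_{n \geq 1} \LL(\vec{\tau}^\circlearrowleft)^n$ with ${\longrightarrow}$ distributing over the union and the concatenation, we obtain ${\longrightarrow_\tau} = \bigcup_{\vec{w} \in \LL(\vec{\tau}^\circlearrowleft)^+} {\longrightarrow_{\vec{w}}} = \bigcup_{\vec{w} \in \LL(\tau)} {\longrightarrow_{\vec{w}}}$, establishing the first bullet for $\tau$. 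One must also handle $\mbp(\tau) \setminus \{\tau\}$, i.e.\ the conjunctive variants $\tau|_\psi$ with $\psi \in \mbp(\accel(\cond(\tau)))$: for these, $\LL$ assigns (via the lifting) the same language $\LL(\tau)$, and $\bigvee \mbp(\cond(\tau)) \equiv_\AA \cond(\tau)$ combined with the disjointness of syntactic implicants gives the inclusion in the right direction; strictly, the bullet as stated is about $\tau \in \mbp(\SS)$ so the conjunctive variants are precisely what we need to cover, and I expect the cleanest route is to observe $\LL(\tau|_\psi)$ inherits $\LL(\tau)$ so that the first bullet reduces to the union statement restricted to runs whose first step lands in the sub-relation $\longrightarrow_{\tau|_\psi}$ — a point requiring a little care about how $\LL$ is defined on these variants. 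For \Nonterm: here $\tau = \ell \to \err \guard{\psi}$ with $\psi \models_\AA^\infty \vec{\tau}^\circlearrowleft$ and $\LL(\tau) = \LL(\vec{\tau}^\circlearrowleft)^\omega$; given a model $\sigma$ of $\psi$, \Cref{def:certificate} yields an infinite $\longrightarrow_{\chain(\vec{\tau}^\circlearrowleft)}$-sequence from $\ell(\sigma(\vec{x}))$, each step of which decomposes (via ${\longrightarrow_{\chain}} = {\longrightarrow} \circ \cdots \circ {\longrightarrow}$) into $|\vec{\tau}^\circlearrowleft|$ steps, and then the IH (first bullet) applied to each transition of $\vec{\tau}^\circlearrowleft$ refines each such step into a concrete word in $\LL(\vec{\tau}^\circlearrowleft)$; concatenating these infinitely many nonempty words gives an infinite sequence whose transition labels form a word in $\LL(\vec{\tau}^\circlearrowleft)^\omega = \LL(\tau)$, which is the second bullet.

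**The main obstacle** I anticipate is bookkeeping around the definition of $\LL$ on the conjunctive variants in $\mbp(\SS) \setminus \mbp(\TT)$ versus the ``raw'' learned transitions, and making rigorous the interchange of the relational union $\bigcup_{\vec{w} \in \LL} {\longrightarrow_{\vec{w}}}$ with the operations $\cdot^+$, $\cdot^\omega$, and language concatenation — in particular, for $\cdot^\omega$ one must argue that an infinite $\longrightarrow_{\vec{\tau}^\circlearrowleft}$-trajectory can be coherently refined step-by-step (a König's-lemma-style or straightforward dependent-choice argument, since each finite step is refined independently using the IH), and that excluding $\epsilon$ from the factors of $\cdot^\omega$ causes no loss because $\vec{\tau}^\circlearrowleft$ being recursive and having satisfiable condition means $\LL(\vec{\tau}^\circlearrowleft)$ contains only nonempty words. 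None of these steps is deep, but the notational alignment between the sequence-lifting of $\LL$ and the chaining-lifting of $\longrightarrow$ is where an error would most easily creep in, so I would state and use a small auxiliary fact: for any $\vec{\sigma}$ with $\LL(\vec{\sigma})$ regular, ${\longrightarrow_{\vec{\sigma}}} = \bigcup_{\vec{w} \in \LL(\vec{\sigma})} {\longrightarrow_{\vec{w}}}$, proved by a trivial induction on $|\vec{\sigma}|$ from the single-transition case (the first bullet's IH) and ${\longrightarrow_{\chain(\tau,\tau')}} = {\longrightarrow_\tau} \circ {\longrightarrow_{\tau'}}$.
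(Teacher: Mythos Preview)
Your proposal is correct and follows essentially the same route as the paper. The paper defers the first bullet to \cite[Lemma~17]{cav23} and then proves the second bullet exactly as you outline: unfold the certificate into an infinite $\longrightarrow_{\chain(\vec{\tau}^\circlearrowleft)}$-sequence, decompose each macro-step into its $|\vec{\tau}^\circlearrowleft|$ micro-steps, apply the first bullet to each component transition to obtain words in $\LL(\tau_j^\circlearrowleft)$, and concatenate to land in $\LL(\vec{\tau}^\circlearrowleft)^\omega$. Your worry about $\mbp(\tau)\setminus\{\tau\}$ for learned transitions is largely moot: $\accel$ returns a formula in $\QF_\land(\Sigma_{\AA'})$ by \Cref{def:accel}, so transitions learned via \Accelerate are already conjunctive and $\mbp(\tau)=\{\tau\}$; the paper tacitly relies on the same observation (and treats the \Nonterm case analogously).
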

\makeproof*{lem:lang}{
  \lang*
  \begin{proof}
    The first part of the lemma is analogous to \cite[Lemma 17]{cav23}.
    For the second part of the lemma, let $\SS' \Def \SS \setminus \{\tau\}$ and ${\longrightarrow_{\LL(\tau)}} \Def \bigcup_{\vec{\tau} \in \LL(\tau)}{\longrightarrow_{\vec{\tau}}}$.
    Clearly, $\tau$ has been learned via \Nonterm.
    Hence, there is a $\vec{\tau}^\circlearrowleft = [\tau^\circlearrowleft_1,\ldots,\tau^\circlearrowleft_k] \in (\SS')^*$ such that $\psi \models_{\AA}^\infty \chain(\vec{\tau}^\circlearrowleft)$, i.e., for every model $\sigma$ of $\psi$, there is an infinite sequence
    \begin{gather*}
      \ell(\sigma(\vec{x})) = \state{t}_0 \longrightarrow_{\chain(\vec{\tau}^\circlearrowleft)} \state{t}_1 \longrightarrow_{\chain(\vec{\tau}^\circlearrowleft)} \ldots
    \end{gather*}
    and thus there is also an infinite sequence
    \begin{gather*}
      \ell(\sigma(\vec{x})) = \state{t}_{0} = \state{t}_{0,0} \longrightarrow_{\tau_1^\circlearrowleft} \state{t}_{0,1} \longrightarrow_{\tau_2^\circlearrowleft} \ldots \longrightarrow_{\tau_k^\circlearrowleft} \state{t}_{0,k} = \state{t}_1 = \state{t}_{1,0} \longrightarrow_{\tau_1^\circlearrowleft} \ldots
    \end{gather*}
    Let $\sigma \models_\AA \psi$ be arbitrary but fixed.
 Then by the first part of the lemma, we get
    \begin{gather*}
      \ell(\sigma(\vec{x})) = \state{t}_{0,0} \longrightarrow_{\LL(\tau_1^\circlearrowleft)} \state{t}_{0,1} \longrightarrow_{\LL(\tau_2^\circlearrowleft)} \ldots \longrightarrow_{\LL(\tau_k^\circlearrowleft)} \state{t}_{0,k} = \state{t}_{1,0} \longrightarrow_{\LL(\tau_1^\circlearrowleft)} \ldots
      \end{gather*}
      Therefore, we also have
    \begin{gather*}
      \ell(\sigma(\vec{x})) = \state{t}_{0,0} \longrightarrow_{\vec{\tau}_{0,1}} \state{t}_{0,1} \longrightarrow_{\vec{\tau}_{0,2}} \ldots \longrightarrow_{\vec{\tau}_{0,k}} \state{t}_{0,k} = \state{t}_{1,0} \longrightarrow_{\vec{\tau}_{0,1}} \ldots
    \end{gather*}
    where $\vec{\tau}_{i,j} \in \LL(\tau_j^\circlearrowleft)$ for all $i \in \NN$ and all $1 \leq j \leq k$.
    As argued before, all transitions in $\vec{\tau}^\circlearrowleft$ are safe because $\chain(\vec{\tau}^\circlearrowleft)$ is recursive.
    Thus, $\LL$ maps all transitions in $\vec{\tau}^\circlearrowleft$ to regular languages.
    Therefore, by the definition of the lifting of $\LL$ to sequences, we get $\vec{\tau}_{i,1}::\ldots::\vec{\tau}_{i,k} \in \LL(\vec{\tau}^\circlearrowleft)$ for all $i \in \NN$.
    Hence, there exists an infinite sequence $[\tau_1, \tau_2, \ldots]
    \in \LL(\vec{\tau}^\circlearrowleft)^\omega = \LL(\tau)$ with
     \begin{gather*}
      \ell(\sigma(\vec{x})) = \state{s}_1 \longrightarrow_{\tau_1} \state{s}_2 \longrightarrow_{\tau_2} \ldots
    \end{gather*}\qed
  \end{proof}
}
\noindent
Based on this lemma, we can prove that our extension of $\leadsto_\nt$ from \Cref{def:calc-reg} is not refutationally complete.
Then refutational incompleteness of ADCL-NT as introduced in \Cref{def:calc-nonterm} follows immediately.
The reason is that $\LL$ is only used in the premise of \Init in \Cref{def:calc-reg}, but there the requirement ``$\LL(\tau) = \{\tau\}$ for all $\tau \in \mbp(\TT)$'' is trivially satisfiable by choosing $\LL$ accordingly.
\begin{restatable}{theorem}{incomplete}
  \label{thm:incomplete}
  There is a non-terminating TS $\TT$ such that $\TT \not\leadsto_\nt^* \unsafe$.
\end{restatable}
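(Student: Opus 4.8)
The plan is to exhibit a concrete non-terminating TS $\TT$ together with an invariant on $\leadsto_\nt$-derivations that rules out ever reaching $\unsafe$. The natural candidate is a TS built so that (i) it is non-terminating, but (ii) every recursive sequence of original transitions, and more generally every $\vec{\tau}^\circlearrowleft$ that can ever appear as a recursive suffix of a trace, has a \emph{terminating} $\longrightarrow_{\vec{\tau}^\circlearrowleft}$-relation, so that \Nonterm can never fire with a true certificate. Following the hint, I would adapt the non-termination example from \cite[Thm.~18]{cav23}: take a counter that is incremented by an accelerable loop and a second loop whose guard can only be satisfied finitely often unless the counter has been ``accelerated to infinity,'' which $\longrightarrow_{\TT}$ can do in the limit of an infinite run but which no single chained finite sequence of transitions can realize.

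\textbf{Key steps, in order.} First, I would fix $\TT$ explicitly and verify directly that it has an infinite run — this is a short concrete argument producing one witnessing $\sigma$ and the associated $\vec{\tau}\in\TT^\omega$. Second, I would prove the central invariant: along any derivation $\TT\leadsto_\nt^*(\SS,\vec{\tau},\vec{B},\LL)$, for every learned $\tau=(\ell\to\err\guard{\psi})\in\mbp(\SS)$ created by \Nonterm, the premise $\psi\models_\AA^\infty\vec{\tau}^\circlearrowleft$ is \emph{unsatisfiable as a hypothesis} — i.e.\ no such recursive $\vec{\tau}^\circlearrowleft$ over $\mbp(\SS)$ admits a certificate. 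To do this I would use \Cref{lem:lang}: if $\vec{\tau}^\circlearrowleft$ were non-terminating, then by the second bullet of \Cref{lem:lang} (applied after the hypothetical \Nonterm step, or more carefully, by combining the first bullet with the structure of $\vec{\tau}^\circlearrowleft$) the original-transition language $\LL(\vec{\tau}^\circlearrowleft)^\omega$ would be non-empty and would give an infinite $\longrightarrow_{\mbp(\TT)}$-run whose trace of locations is eventually periodic with period realized by a \emph{fixed} finite chained word $w\in\mbp(\TT)^*$; but by construction of $\TT$ every such periodic word has a terminating relation, contradiction. Third, since \Nonterm is the only way a transition into $\err$ can be learned (we assumed $\TT$ has no unsafe transitions), $\mbp(\SS)$ never contains a transition to $\err$; hence \Step can never append an unsafe transition, $\vec{\tau}$ is never unsafe, and \Refute never applies. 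Therefore $\TT\not\leadsto_\nt^*\unsafe$.

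\textbf{Main obstacle.} The delicate point is the second step: ruling out \emph{all} recursive suffixes $\vec{\tau}^\circlearrowleft$ over the \emph{learned} system $\mbp(\SS)$, not just over $\mbp(\TT)$. Learned transitions obtained by \Accelerate change the theory ($\AA'\neq\AA$, with a fresh variable $n$ ranging over $\NN$), and a suffix mixing accelerated and non-accelerated transitions could in principle ``jump'' in a way no finite original word does. This is exactly where the $\LL$-component is essential: \Cref{lem:lang} lets me translate any infinite $\longrightarrow_{\mbp(\SS)}$-run back into an infinite $\longrightarrow_{\mbp(\TT)}$-run over original transitions, collapsing the accelerated steps $\tau\mapsto\LL(\vec{\tau}^\circlearrowleft)^+$ into ordinary finite iterations. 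So the real work is (a) choosing $\TT$ so that the set of eventually-periodic original-transition words with non-terminating relation is empty — e.g.\ by making the ``dangerous'' loop's guard depend on a quantity that is strictly consumed each time the loop is taken and only replenished by an unbounded-but-finite burst — and (b) checking that acceleration of the replenishing loop still only ever yields, for each concrete value of $n$, a finite amount of replenishment, so no learned transition secretly encodes an infinite burst. I would package (a)–(b) as a lemma of the form ``for every recursive $\vec{\tau}^\circlearrowleft\in\mbp(\SS)^+$ with $\TT\leadsto_\nt^*(\SS,\dots)$, $\longrightarrow_{\vec{\tau}^\circlearrowleft}$ is well-founded,'' proved by induction on the derivation length using \Cref{lem:lang}, and then the theorem is immediate.
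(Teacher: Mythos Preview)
Your plan has the right overall shape (show that \Nonterm can never fire, hence no unsafe transition is ever learned, hence \Refute never applies), and your use of \Cref{lem:lang} to translate an infinite $\longrightarrow_{\mbp(\SS)}$-run back to an infinite $\longrightarrow_{\mbp(\TT)}$-run is exactly right. But the argument breaks at the step where you claim that the resulting original-transition word is ``eventually periodic with period realized by a \emph{fixed} finite chained word $w\in\mbp(\TT)^*$.'' This is false as soon as $\vec{\tau}^\circlearrowleft$ contains an accelerated transition. If $\tau_a=\accel(\tau_1)$ then $\LL(\tau_a)=\{\tau_1\}^+$, so for $\vec{\tau}^\circlearrowleft=[\tau_a,\tau_2]$ we get $\LL(\vec{\tau}^\circlearrowleft)=\{\tau_1^n\tau_2\mid n\geq 1\}$, and $\LL(\vec{\tau}^\circlearrowleft)^\omega$ contains, e.g., $\tau_1\tau_2\,\tau_1^2\tau_2\,\tau_1^3\tau_2\cdots$, which is not eventually periodic. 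Your proposed construction is vulnerable to exactly this: if the infinite run has the shape $\tau_1^{n_1}\tau_2\tau_1^{n_2}\tau_2\cdots$ (counter replenished, then consumed), then after accelerating $\tau_1$ the recursive suffix $[\accel(\tau_1),\tau_2]$ \emph{is} non-terminating, because the existential $n$ in $\accel(\tau_1)$ supplies as much replenishment as the next $\tau_2$ requires. So your check~(b) cannot be satisfied, and \Nonterm would fire.

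The paper replaces ``eventually periodic'' by the sharper combinatorial invariant ``not square-free.'' The key observation is that the languages $\LL(\tau)$ are generated from singletons using only concatenation and Kleene plus, and one proves by induction on this structure that every such regular language contains at most one square-free word; consequently $\LL(\vec{\tau}^\circlearrowleft)^\omega$ contains no \emph{infinite} square-free word. The matching TS is built from the Thue--Morse sequence so that (after projecting to a distinguished subset of transitions) every infinite $\longrightarrow_\TT$-sequence is square-free. These two facts contradict each other via \Cref{lem:lang}, exactly along the lines of your third step. So what you are missing is not the architecture of the argument but the right combinatorial notion (square-freeness instead of periodicity) and a TS tailored to it.
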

\begin{proofsketch}
  As in the proof of \cite[Thm.~18]{cav23},
  for any (original or learned) transition $\tau$ such that $\LL(\tau)$ is regular,
  $\LL(\tau)$ contains at most one square-free word (i.e., a word without a non-empty infix $w::w$).
  Thus, if $\LL(\tau)$ is $\omega$-regular, then $\LL(\tau)$ does not contain an infinite square-free word.
  Moreover, as in the proof of \cite[Thm.~18]{cav23}, one can construct a TS $\TT$ that admits a single infinite run $\vec{\tau}$,
  and this infinite run is square-free.
  Thus, there is no transition $\tau$ such that $\LL(\tau)$ contains a suffix of $\vec{\tau}$, i.e.,
  no $\leadsto_\nt$-derivation starting with $\TT$ corresponds to $\vec{\tau}$.
  Hence, by \Cref{lem:lang}, assuming $\TT \leadsto^*_\nt \unsafe$ results in a contradiction. \qed
\end{proofsketch}
\makeproof*{thm:incomplete}{
  \incomplete*
  \begin{proof}
    Similar to the proof of \cite[Thm.\ 18]{cav23}, one can construct a non-terminating TS $\TT$ over the theory $\AA_{LIA}$ of linear integer arithmetic such that every run corresponds to a \emph{square-free word}, i.e., a word without a non-empty infix of the form $w::w$.
    We first recapitulate the construction from \cite[Thm.\ 18]{cav23}.

    We consider the Thue-Morse sequence $[v_i]_{i \in \NN}$ \cite{oeis-thue-morse}.
    Let $w_i \Def v_{i+1} - v_i$.
    The resulting infinite sequence $[w_i]_{i \in \NN}$ over the alphabet $\{-1,0,1\}$ is well-known to be square-free \cite{oeis-thue-morse-difference}.
    Then $\TT$ contains
    \begin{equation}
      \label{eq:nt1}
      \tag{\protect{\ensuremath{\tau_{\pred{init}}}}}
      \init \to \pred{ThueMorse} \guard{i=0 \land x=1 \land \eq{i} \land \eq{x}},
    \end{equation}
    the following transitions $\TT_{\pred{ThueMorse}}$
    \begin{align*}
      \pred{ThueMorse} & {} \to \pred{next} \guard{x = -1 \land \eq{x} \land \inc{i}}  \\
      \pred{ThueMorse} & {} \to \pred{next} \guard{x = 0 \land \eq{x} \land \inc{i}}  \\
      \pred{ThueMorse} & {} \to \pred{next} \guard{x = 1 \land \eq{x} \land \inc{i}},
    \end{align*}
    and transitions $\TT_{\pred{next}}$ such that ${\longrightarrow_{\TT_{\pred{next}}}}$
    is well founded and
    \begin{gather*}
      \pred{next}(i,x) \longrightarrow_{\TT_{\pred{next}}} \ldots \longrightarrow_{\TT_{\pred{next}}} \pred{ThueMorse}(i',x') \quad \text{iff} \quad i>0 \land \eq{i} \land x = w_{i-1} \land x' = w_i.
    \end{gather*}
    Note that $\TT_{\pred{next}}$ exists, since $[w_i]_{i \in \NN}$ is computable and TSs over $\AA_{LIA}$ are Turing complete.
    Moreover, we may assume that the $\longrightarrow_{\TT_{\pred{next}}}$-sequence above is unique for each $i > 0$.
    The reason is that \emph{deterministic} TSs over $\AA_{LIA}$ (where each configuration has at most one successor) are still Turing complete.
    Thus, $\TT$ admits one and only one infinite run.
    W.l.o.g., we assume $\mbp(\TT) = \TT$ in the sequel.
    Then
    \begin{multline}
      \label{eq:all-square-free}
      \text{for the unique infinite sequence $\init(0,1) = \state{s}_1 \longrightarrow_{\tau_1} \state{s}_2 \longrightarrow_{\tau_2} \ldots$}\\
      \text{where $\tau_1,\tau_2,\ldots \in \TT$, the word $\proj{[\tau_1,\tau_2,\ldots]}{\TT_{\pred{ThueMorse}}}$  is square-free}
    \end{multline}
    due to square-freeness of $[w_i]_{i \in \NN}$ and termination of $\TT_{\pred{next}}$.
    Here, the notation $\proj{[\tau_1,\tau_2,\ldots]}{\TT_{\pred{ThueMorse}}}$ denotes the sequence that results from $[\tau_1,\tau_2,\ldots]$ by omitting all transitions that are not contained in $\TT_{\pred{ThueMorse}}$.

    Assume $\TT \leadsto_\nt^+ (\SS,\vec{\tau},\vec{B},\LL) \leadsto_\nt \unsafe$.
    Then $\proj{\LL(\tau)}{\TT_{\pred{ThueMorse}}}$ does not contain an infinite square-free word, for each $\tau \in \SS$.
    To see this, first note that $\LL(\tau)$ is built from singleton languages over finite words, concatenation, Kleene plus, and the operator  $\{ ... \}^\omega$.
    Thus, $\proj{\LL(\tau)}{\TT_{\pred{ThueMorse}}}$ is built using the same operations.

    We first prove that any language $\LL$ over a finite alphabet that is built from singleton languages, concatenation, and Kleene plus contains at most one square-free word.
    We use induction on the construction of $\LL$.
    If $\LL$ is a singleton language, then the claim is trivial.
    If $\LL = \LL'::\LL''$, then the square-free words in $\LL$ are a subset of
    \begin{equation}
      \label{eq:square-free}
      \{w'::w'' \mid w' \in \LL', w'' \in \LL'', w' \text{ and } w'' \text{ are square-free}\}.
    \end{equation}
    By the induction hypothesis, there is at most one square-free word $w' \in \LL'$ and at most one square-free word $w'' \in \LL''$.
    Thus, the size of \eqref{eq:square-free} is at most one.
    If $\LL = (\LL')^+$, then $\LL$ contains the same square-free words as $\LL'$.
    To see this, let $w \in (\LL')^+$ be square-free.
    Then there are $n \in \NN_{\geq 1}$, $v_1,\ldots,v_n \in \LL'$ such that $v_1::\ldots::v_n = w$.
    Since $w$ is square-free, each $v_i$ must be square-free, too.
    As $\LL'$ contains at most one square-free word by the induction hypothesis, we get
    $v_1 = \ldots = v_n$.
    Since $w$ is square-free, this implies $n = 1$ (or that $w$ and the $v_i$ are the empty word).

    Next, note that $\LL^\omega$ cannot contain an infinite square-free word if $\LL$ contains at most one square-free word.
    To see this, let $v \in \LL^\omega$ be square-free, where $v = v_0::v_1::\ldots$ for
    $v_0,v_1,\ldots \in \LL$.
    All $v_i$ must be square-free, because otherwise $v$ would not be square-free either.
    As $\LL$ contains at most one square-free word, we get $v_i=v_j$ for all $i,j \in \NN$.
    If the only square-free word in $\LL$ is $\epsilon$, then $v =\epsilon \notin \LL^\omega$.
    Otherwise, $v$ is not square-free.

    Hence, it follows that
    \begin{equation}
      \label{eq:not-square-free}
      \text{$\proj{\LL(\tau)}{\TT_{\pred{ThueMorse}}}$ contains no infinite square-free word}.
    \end{equation}

    Assume
    \begin{gather*}
      \TT \leadsto_\nt^+ (\SS,\vec{\tau},\vec{B},\LL) \overset{\textsc{S}}{\leadsto}_\nt (\SS,\vec{\tau}::(\ell \to \err \guard{\psi}),\vec{B},\LL) \overset{\textsc{R}}{\leadsto}_\nt \unsafe.
    \end{gather*}
    So by \Cref{lem:run} and the definition of \Step, we have $\init(0,1) \longrightarrow_\TT^+ \ell(c_i,c_x)$ where $\psi[i/c_i,x/c_x]$ is satisfiable.
    As $\TT$ is safe, we have $\LL(\ell \to \err \guard{\psi}) \subseteq \TT^\omega$ by the definition of \Nonterm.
    Therefore, by \Cref{lem:lang}, there is a $[\tau_1,\tau_2,\ldots] \in \LL(\ell \to \err \guard{\psi})$ such that
    \begin{gather*}
      \ell(c_i,c_x) = \state{s}_1 \longrightarrow_{\tau_1} \state{s}_2 \longrightarrow_{\tau_2} \ldots
    \end{gather*}
    Thus, $\init(0,1) \longrightarrow_{\TT}^+ \state{s}_1 \longrightarrow_{\tau_1} \state{s}_2 \longrightarrow_{\tau_2} \ldots$ is an infinite $\longrightarrow_\TT$-derivation.
    However, as $(\ell \to \err \guard{\psi}) \in \SS$, the word $\proj{[\tau_1,\tau_2,\ldots]}{\TT_{\pred{ThueMorse}}}$
    is not square-free due to \eqref{eq:not-square-free}.
    Thus, for any prefix $\vec{\tau}$, the word $\proj{\vec{\tau} :: [\tau_1,\tau_2,\ldots]}{\TT_{\pred{ThueMorse}}}$ is not square-free either,
    contradicting \eqref{eq:all-square-free}.
    So we contradicted the assumption $\TT \leadsto_\nt^+ \unsafe$ and hence the theorem follows. \qed
  \end{proof}
}

Since ADCL can prove unsafety as well as safety, it is natural to ask if there is a dual to ADCL-NT that can prove termination.
The most obvious approach would be the following:
Whenever the trace has a recursive suffix $\vec{\tau}^\circlearrowleft$, then termination
of $\vec{\tau}^\circlearrowleft$ needs to be proven before the next
$\leadsto$-step.
The following example shows that this is not enough to ensure that $\TT \leadsto_\nt^+ \safe$
implies termination of $\TT$.

\begin{example}
  Let $\TT \Def \{\tau_{\mathsf{i}} = \init \to \ell \guard{\psi_{\mathsf{i}}}\} \cup \{\tau_m = \ell \to \ell \guard{\psi_m} \mid 0 \leq m \leq 2\}$ and
  \begin{align*}
     \psi_{\mathsf{i}} & {} \Def {x' = 0} & \psi_0 & {} \Def {x=0 \land x'=1} & \psi_1 & {} \Def {x=1 \land x'=2} & \psi_2 \Def {x=2 \land x'=1}.
  \end{align*}
  As we have $\ell(1) \longrightarrow_{\tau_1} \ell(2) \longrightarrow_{\tau_2} \ell(1)$, $\TT$ is clearly non-terminating.
  We get:

  \noindent
  \begin{align*}
    \TT \overset{\textsc{I}}{\leadsto}_\nt {} & (\TT,[],[\emptyset]) \overset{\textsc{S}}{\leadsto}_\nt^3 (\TT,[\tau_{\mathsf{i}},\tau_0,\tau_1],\emptyset^4) \overset{\textsc{A}}{\leadsto}_\nt (\SS_1,[\tau_{\mathsf{i}},\tau_{01}],\emptyset^2::\{\tau_{01}\}) \\
    {} \overset{\textsc{S}}{\leadsto}_\nt {} & (\SS_1,[\tau_{\mathsf{i}},\tau_{01},\tau_2],\emptyset^2::\{\tau_{01}\}::\emptyset) \overset{\textsc{A}}{\leadsto}_\nt (\SS_2,[\tau_{\mathsf{i}},\tau_{012}],\emptyset^2::\{\tau_{01},\tau_{012}\}) \\
    {} \overset{\textsc{S}}{\leadsto}_\nt {} & (\SS_2,[\tau_{\mathsf{i}},\tau_{012},\tau_1],\emptyset^2::\{\tau_{01},\tau_{012}\}::\emptyset) \overset{\textsc{C}}{\leadsto}_\nt (\SS_2,[\tau_{\mathsf{i}},\tau_{012}],\emptyset^2::\{\tau_{01},\tau_{012},\tau_1\}) \\
    {} \overset{\textsc{B}}{\leadsto}_\nt {} & (\SS_2,[\tau_{\mathsf{i}}],\emptyset::\{\tau_{012}\}) \leadsto_\nt^* (\SS_2,[\tau_{\mathsf{i}}],\emptyset::\{\tau_{012},\tau_0,\tau_{01}\}) \overset{\textsc{B}}{\leadsto}_\nt (\SS_2,[],[\{\tau_{\mathsf{i}}\}]) \overset{\textsc{P}}{\leadsto}_\nt \safe
  \end{align*}

\noindent
After three {\Step}s, we accelerate the recursive suffix $[\tau_0,\tau_1]$ of the
trace, resulting in $\tau_{01} = \ell \to \ell \guard{x = 0 \land x' = 2}$ and $\SS_1 =
\TT \cup \{\tau_{01}\}$.
  After one more step, $[\tau_{01},\tau_2]$ is accelerated to $\tau_{012} = \ell \to \ell
  \guard{x = 0 \land x' = 1}$ and we get $\SS_2 = \SS_1 \cup \{ \tau_{012} \}$.
  After the next step, $[\tau_{012},\tau_1]$ is redundant w.r.t.\ $\tau_{01}$, so \Covered applies.
  Then we \Backtrack, as no other transitions are active.
  The next {\Step}s also yield states that allow for backtracking (as their traces have the redundant suffixes $[\tau_0,\tau_1]$ and $[\tau_{01},\tau_2]$), so we can finally apply \Backtrack again and finish with \Prove.

  Note that whenever the trace has a recursive suffix, then it leads from $\ell(i)$ to $\ell(j)$ where $i \neq j$, i.e., each such suffix is trivially terminating.
  In particular, the cycle $\ell(1) \longrightarrow_{\tau_1} \ell(2) \longrightarrow_{\tau_2} \ell(1)$ is not apparent in any of the states.
\end{example}
This example reveals a fundamental problem when adapting ADCL for proving termination:
ADCL ensures that all reachable \emph{configurations} are covered, which is crucial for proving safety, but there are no such guarantees for all \emph{runs}.
Therefore, we think that adapting ADCL for proving termination requires major changes.

\section{Related Work and Experiments}
\label{sec:experiments}

We presented ADCL-NT, a variant of ADCL for proving non-termination.
The key insight is that tightly integrating techniques to detect
non-terminating transitions into ADCL allows for handling classes of TSs that are challenging for other techniques.
In particular, ADCL-NT can find non-terminating executions involving disjunctive transitions or complex patterns of transitions.
Moreover, it tightly couples the search for non-terminating configurations and the proof of their reachability, whereas other approaches usually separate these two steps.

\vspace{-0.5em}
\paragraph*{\bf Related Work}

There are many techniques to find certificates of non-termination \cite{larraz14,geometricNonterm,amram19,sttt22,lctrs-loops,loat}.
We could use any of them (they are black boxes for ADCL-NT).

Most non-termination techniques for TSs first search for non-ter\-mi\-nat\-ing configurations, and then prove their reachability \cite{larraz14,irankWST,t2-tool,jbc-nonterm}, or they extract and analyze \emph{lassos} \cite{geometricNonterm}.
In contrast, ADCL-NT tightly integrates the search for non-terminating configurations and reachability analysis.

Earlier versions of our tool \tool{LoAT} \cite{fmcad19,loat} also interleaved both steps using a technique akin to the state elimination method to transform finite automata to regular expressions.
This technique cannot handle disjunctions, and it is incomplete for reachability.
Hence, \tool{LoAT} is now solely based on ADCL-NT.

\vspace{-0.5em}
\paragraph*{\bf Implementation}

So far, our implementation in our tool \tool{LoAT} is restricted to integer arithmetic.
It uses the technique from \cite{loat} for acceleration and finding certificates of non-termination, the SMT solvers \tool{Z3} \cite{z3} and \tool{Yices} \cite{yices}, the recurrence solver \tool{PURRS} \cite{purrs}, and \tool{libFAUDES} \cite{faudes} to implement the automata-based redundancy check from \cite{cav23}.

\vspace{-0.5em}
\paragraph*{\bf Experiments}

To evaluate our implementation in \tool{LoAT}, we used the 1222 \emph{Integer Transition Systems} (ITSs) and the 335
\emph{{\normalfont{\pl{C}}} Integer Programs} from the \emph{Termination Problems
Database}~\cite{tpdb} used in \emph{TermComp} \cite{termcomp}.
The \pl{C} programs are small, hand-crafted examples that often require complex proofs.
The ITSs are significantly larger, as they were obtained from automatic transformations of \pl{C} or \pl{Java} programs.
Moreover, they contain a lot of ``noise'', e.g., branches where termination is trivial or variables that are irrelevant for (non-)termination.
Thus, they are well suited to test the scalability and robustness of the tools.

We compared our implementation (\tool{LoAT ADCL}) with other leading termina\-tion analyzers: \tool{iRankFinder}~\cite{irankWST,amram19}, \tool{T2}~\cite{t2-tool}, \tool{Ultimate}~\cite{Ultimate}, \tool{VeryMax}~\cite{larraz14,borralleras17}, and the previous version of \tool{LoAT}~\cite{loat} (\tool{LoAT '22}).
For \tool{T2}, \tool{VeryMax}, and \tool{Ultimate}, we took the versions of their last \emph{TermComp} participations (2015, 2019, and 2022).
For \tool{iRankFinder}, we used the configuration from the evaluation of \cite{loat}, which is tailored towards proving non-termination.
We excluded \tool{AProVE} \cite{tool-jar}, as it cannot prove
non-termination of ITSs, and it uses \tool{LoAT} and \tool{T2} as backends when analyzing \pl{C} programs.
Moreover, we excluded \tool{Ultimate} from the evaluation on ITSs, as it cannot parse them.
All experiments were run on \tool{StarExec} \cite{starexec} with $300$s wallclock timeout, $1200$s CPU timeout, and $128$GB memory limit per example.

\begin{table}
  \begin{center}
    \begin{tabular}{|c||c|c||c||c|c|c||c|c|}
      \hhline{~--------} \multicolumn{1}{c|}{} & \multicolumn{2}{c||}{No} & Yes & \multicolumn{3}{c||}{Runtime overall} & \multicolumn{2}{c|}{Runtime No} \\
      \hhline{~--------} \multicolumn{1}{c|}{}    & solved & unique & solved & average & median & timeouts & average & median \\
      \hhline{-========} \tool{LoAT ADCL}
                                               & 521    & 9      & 0      & 48.6~s  & 0.1~s  & 183      & 2.9~s   & 0.1~s  \\
      \hhline{---------} \tool{LoAT '22}     & 494    & 2      & 0      & 7.4~s   & 0.1~s  &   0      & 6.2~s   & 0.1~s  \\
      \hhline{---------} \tool{T2}           & 442    & 3      & 615    & 17.2~s  & 0.6~s  &  45      & 7.4~s   & 0.6~s  \\
      \hhline{---------} \tool{VeryMax}      & 421    & 6      & 631    & 28.3~s  & 0.5~s  &  30      & 30.5~s  & 14.5~s \\
      \hhline{---------} \tool{iRankFinder}  & 409    & 0      & 642    & 32.0~s  & 2.0~s  &  93      & 12.3~s  & 1.7~s  \\
      \hline
    \end{tabular}
  \end{center}
  \vspace{-1em}
\end{table}
\noindent
The table above shows the results for ITSs, where the column ``unique'' contains the number of examples that could be solved by
the respective tool, but no others.
It shows that \tool{LoAT ADCL} is the most powerful tool for proving non-termination of ITSs.
The main reasons for the improvement are that \tool{LoAT ADCL} builds upon a complete technique for proving reachability (in contrast to, e.g., \tool{LoAT '22}), and the close integration of non-termination techniques into a technique for proving reachability, whereas most competing tools separate these steps from each other.

If we only consider the examples where non-termination is proven, \tool{LoAT ADCL} is also the fastest tool.
If we consider all examples, then the \emph{average} runtime of \tool{LoAT ADCL} is significantly slower.
This is not surprising, as ADCL-NT does not terminate in general.
So while it is very fast in most cases (as witnessed by the very fast \emph{median} runtime), it times out more often than the other tools.

For \pl{C} integer programs, the best tools are very close (\tool{VeryMax}:
$103{\times}$No, \tool{LoAT ADCL}: $102{\times}$No, \tool{Ultimate}: $100{\times}$No).
Regarding runtimes, the situation is analogous to ITSs.
See \cite{website} for detailed results, more information about our evaluation, and a pre-compiled binary.
\tool{LoAT} is open-source and available on GitHub \cite{github}.

\bibliographystyle{splncs04}
\bibliography{refs,crossrefs,strings}

\report{
\begin{appendix}
  \appendixproofsection{Missing Proofs}\label{sec:MissingProofs}
\appendixproof*{thm:correct}

\noindent
The following variant of \Cref{thm:correct} for $\leadsto_\nt$ can be proven analogously.
\begin{lemma}
  \label{lem:run}
  If $\TT \leadsto_\nt^* (\SS,\vec{\tau},\vec{B})$ and $\vec{\tau}$ is non-empty and safe,
  then  $\cond(\vec{\tau}) \not \equiv_\AA \bot$ and
  ${\longrightarrow_{\vec{\tau}}} \subseteq {\longrightarrow^+_\TT}$.

\end{lemma}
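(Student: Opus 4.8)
The plan is to mirror the proof of \Cref{thm:correct}, but to keep track of which transitions of $\SS$ are \emph{safe}. Compared to $\leadsto$, the relation $\leadsto_\nt$ only adds the rule \Nonterm and restricts \Covered; the restricted \Covered merely applies $\bt$, which shrinks the trace and grows a blocking set but neither enlarges $\SS$ nor turns the trace unsafe. The rule \Nonterm adds a transition $\ell \to \err \guard{\psi}$, which is \emph{unsafe}, whereas \Accelerate adds $\accel(\vec{\tau}^\circlearrowleft)$ for a \emph{recursive} $\vec{\tau}^\circlearrowleft$, so $\chain(\vec{\tau}^\circlearrowleft) = \ell \to \ell \guard{\ldots}$ where $\ell$ is the source of the first transition of $\vec{\tau}^\circlearrowleft$, hence $\ell \neq \err$ since no transition originates in $\err$; i.e., the learned transition is safe. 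Writing $\SS_{\safe}$ for the subset of safe transitions of $\SS$, we thus have that $\SS_{\safe}$ is $\TT$ together with exactly the transitions learned via \Accelerate.

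The central step is to show, by induction on the length of the $\leadsto_\nt$-derivation, that ${\longrightarrow^+_{\SS_{\safe}}} = {\longrightarrow^+_{\TT}}$ is invariant; this is essentially \cite[Lemma~17]{cav23} restricted to the safe part of $\SS$. It holds initially since $\SS_{\safe} = \TT$. The only rule that changes $\SS_{\safe}$ is \Accelerate (\Nonterm leaves $\SS_{\safe}$ untouched). For \Accelerate, $\vec{\tau}^\circlearrowleft$ is a recursive, hence safe, suffix of the trace, so all of its transitions lie in $\mbp(\SS_{\safe})$; using $\psi \equiv_\AA \bigvee \mbp(\psi)$ and the induction hypothesis, ${\longrightarrow_{\vec{\tau}^\circlearrowleft}} \subseteq {\longrightarrow^+_{\mbp(\SS_{\safe})}} = {\longrightarrow^+_{\SS_{\safe}}} = {\longrightarrow^+_{\TT}}$, whence ${\longrightarrow_{\accel(\vec{\tau}^\circlearrowleft)}} = {\longrightarrow^+_{\vec{\tau}^\circlearrowleft}} \subseteq {\longrightarrow^+_{\TT}}$ by \Cref{def:accel}; the converse inclusion is trivial since $\TT \subseteq \SS_{\safe}$ after the step.

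It then remains to argue, exactly as for \Cref{thm:correct}, that the trace is always a run: \Step appends only active transitions (which by definition keep the trace an initial sequence with satisfiable condition), \Accelerate replaces a suffix $\vec{\tau}^\circlearrowleft$ by some $\tau$ with ${\longrightarrow_{\vec{\tau}^\circlearrowleft}} \subseteq {\longrightarrow_\tau}$ so satisfiability is preserved, $\bt$ passes to a prefix of a run (which is again a run), and \Nonterm does not touch the trace. Thus $\cond(\vec{\tau}) \not\equiv_\AA \bot$ for any non-empty trace. Since $\vec{\tau}$ is a run, consecutive transitions chain properly, so no transition of $\vec{\tau}$ other than the last can go to $\err$ (else the next one would originate in $\err$); together with the safety hypothesis on $\vec{\tau}$, every transition of $\vec{\tau}$ is safe, hence lies in $\mbp(\SS_{\safe})$. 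Therefore ${\longrightarrow_{\vec{\tau}}} \subseteq {\longrightarrow^+_{\mbp(\SS_{\safe})}} = {\longrightarrow^+_{\SS_{\safe}}} = {\longrightarrow^+_{\TT}}$, which completes the proof.

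The one genuine difference from \Cref{thm:correct} — and the main thing to get right — is that $\leadsto_\nt$ destroys the identity ${\longrightarrow^+_{\SS}} = {\longrightarrow^+_{\TT}}$, since \Nonterm introduces transitions into $\err$ that $\TT$ lacks; the safety assumption on $\vec{\tau}$ is precisely what lets us replace $\SS$ by $\SS_{\safe}$ and recover the original argument. The only other subtlety is the routine bookkeeping between $\SS$ and $\mbp(\SS)$ (traces live in $\mbp(\SS)^*$, not $\SS^*$), handled via $\psi \equiv_\AA \bigvee \mbp(\psi)$ as in \cite{cav23}.
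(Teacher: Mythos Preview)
Your proposal is correct and follows the same overall approach the paper intends (the paper merely states that the lemma ``can be proven analogously'' to \Cref{thm:correct}). Your treatment is in fact more careful: you explicitly isolate $\SS_{\safe}$ and prove the invariant ${\longrightarrow^+_{\SS_{\safe}}} = {\longrightarrow^+_{\TT}}$, correctly noting that \Nonterm breaks the unrestricted identity ${\longrightarrow^+_{\SS}} = {\longrightarrow^+_{\TT}}$ (an identity the paper nonetheless invokes verbatim in the proof of \Cref{thm:sound-nt}); the safety hypothesis on $\vec{\tau}$ is exactly what lets you restrict to $\SS_{\safe}$ and recover the argument of \Cref{thm:correct}.
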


\appendixproof*{thm:sound-nt}
\appendixproof*{lem:lang}
\appendixproof*{thm:incomplete}

\end{appendix}}

\end{document}